\begin{document}

\title{Fast Compressed Power Spectrum Estimation: Towards A Practical Solution for Wideband Spectrum Sensing}

\author{Linxiao Yang, Jun Fang, ~\IEEEmembership{Senior Member},
Huiping Duan and Hongbin Li, ~\IEEEmembership{Fellow,~IEEE}
\thanks{Linxiao Yang and Jun Fang are with the National Key Laboratory
of Science and Technology on Communications, University of
Electronic Science and Technology of China, Chengdu 611731, China,
Email: JunFang@uestc.edu.cn}
\thanks{Huiping Duan is with the School of Information and Communication Engineering,
University of Electronic Science and Technology of China, Chengdu
611731, China, Email: huipingduan@uestc.edu.cn}
\thanks{Hongbin Li is
with the Department of Electrical and Computer Engineering,
Stevens Institute of Technology, Hoboken, NJ 07030, USA, E-mail:
Hongbin.Li@stevens.edu}
\thanks{This work was supported in part by the National Science
Foundation of China under Grant 61871091.}
\thanks{\copyright 2019 IEEE. Personal use of this material is permitted.
	Permission from IEEE must be obtained for all other uses, in any current or future media, including reprinting/republishing this material for advertising or promotional purposes, creating new collective works, for resale or redistribution to servers or lists, or reuse of any copyrighted component of this work in other works.}}

\maketitle

\begin{abstract}
There has been a growing interest in wideband spectrum sensing due
to its applications in cognitive radios and electronic
surveillance. To overcome the sampling rate bottleneck for
wideband spectrum sensing, in this paper, we study the problem of
compressed power spectrum estimation whose objective is to
reconstruct the power spectrum of a wide-sense stationary signal
based on sub-Nyquist samples. By exploring the sampling structure
inherent in the multicoset sampling scheme, we develop a
computationally efficient method for power spectrum
reconstruction. An important advantage of our proposed method over
existing compressed power spectrum estimation methods is that our
proposed method, whose primary computational task consists of fast
Fourier transform (FFT), has a very low computational complexity.
Such a merit makes it possible to efficiently implement the
proposed algorithm in a practical field-programmable gate array
(FPGA)-based system for real-time wideband spectrum sensing. Our
proposed method also provides a new perspective on the power
spectrum recovery condition, which leads to a result similar to
what was reported in prior works. Simulation results are presented
to show the computational efficiency and the effectiveness of the
proposed method.
\end{abstract}

\begin{keywords}
Compressed sampling, wideband spectrum sensing, power spectrum
estimation.
\end{keywords}

\section{Introduction}
The scarcity of spectrum resource is a critical problem for future
wireless communications and Internet of Things
\cite{PalattellaDohler16}. Dynamic spectrum access
\cite{SongXin12} allows the idle spectrum to be used by unlicensed
users and provides a promising way to enhance the spectrum
efficiency. Real-time spectrum sensing, which empowers the
secondary user to identify spectrum holes, is a fundamental
technique to dynamic spectrum sensing and has received much
interest over the past decade \cite{YucekArslan09,AxellLeus12}.
However, most previous studies, e.g.
\cite{YucekArslan09,AxellLeus12,WangFang10}, focus on narrowband
spectrum sensing. Only a few touched the topic of wideband
spectrum sensing until recently, e.g. \cite{SunNallanathan13}. As
the cognitive radio networks are expected to operate over a wide
frequency range, wideband spectrum sensing is important for future
cognitive radio systems. In addition, in some applications such as
electronic surveillance, one also needs to perform wideband
spectrum sensing to identify the frequency locations of a number
of signals that spread over a wide frequency band. A conventional
receiver requires to sample the received signal at the Nyquist
rate, which may be too power hungry due to the use of high speed
analog-digital converters (ADCs) or even infeasible if the
spectrum under monitoring is very wide, say, of several GHz. One
way to overcome this difficulty is to divide the frequency
spectrum under monitoring into a number of separate frequency
segments and then sequentially scan these frequency channels.
Nevertheless, such a scanning scheme incurs a sensing latency and
may fail to capture short-lived signals \cite{WangFang18}.

The recently emerged technique compressed sensing \cite{Donoho06}
provides a new approach to sensing a wide frequency band via low
sampling rate ADCs. Motivated by the fact that the spectrum in
general is severely underutilized in a specific geographic area,
the rationale behind such schemes is to exploit the inherent
sparsity in the frequency domain and formulate wideband spectrum
sensing as a sparse signal recovery problem which, according to
the compressed sensing theory \cite{CandesRomberg06,Donoho06}, can
perfectly recover the signal of the entire frequency band based on
sub-Nyquist samples. Different compressed sampling architectures
and sparse signal recovery algorithms, e.g.
\cite{MishaliEldar10,ZhangHan11,WakinBecker12,MishaliEldar11,SunChiu12,QinGao16,KhalfiHamdaoui18},
were developed for wideband spectrum sensing over the past few
years. Such an approach, however, suffers several drawbacks.
Firstly, reconstructing the original signal of the entire
frequency band via sparse signal recovery methods involves a
prohibitively high computational complexity for practical systems.
Secondly, due to the noise across a wide frequency band, the
receiver usually has a low signal-to-noise ratio. While in the low
signal-to-noise ratio regime, sparse signal recovery methods yield
barely satisfactory recovery performance.

To overcome these difficulties, some works proposed to reconstruct
the power spectrum of the wideband signal, instead of the original
signal itself, from sub-Nyquist samples
\cite{TianTafesse12,ArianandaLeus12,YenTsai13,CohenEldar14,LagunasNajar15,RomeroAriananda16}.
Compared with recovering the original signal, reconstructing the
power spectrum can significantly reduce the amount of data being
stored and transmitted. Also, the use of statistical information
enables the proposed algorithm to accurately perform spectrum
sensing even in low signal-to-noise ratio environments. In
addition, it was shown that it is even possible to perfectly
reconstruct the power spectrum without placing any sparse
constraint on the wideband spectrum under monitoring
\cite{ArianandaLeus12,YenTsai13,CohenEldar14}. Overall, this
approach seems to be a more practical solution for wideband
spectrum sensing. Nevertheless, existing compressed power spectrum
estimation methods still incur a high computational complexity
that is impractical for real-time wideband spectrum sensing. To
address this difficulty, notice that the data samples obtained via
the multicoset sampling scheme are a subset of the Nyquist
samples. Such a property allows us to establish an amiable
relationship between the autocorrelation sequence and the
sub-Nyquist samples, based on which we develop a computationally
efficient algorithm for compressed power spectrum estimation. The
proposed algorithm has a low computational complexity which scales
linearly with the number of samples (in time) $L$ and the
downsampling factor $N$. In contrast, existing compressed power
spectrum estimation methods have a complexity either scaling
polynomially with $L$ or scaling quadratically with $N$. Also, our
proposed method, which involves the discrete Fourier transform
(DFT) as its computationally dominant task, can be efficiently
implemented via FFT algorithms. These advantages make it a
practical solution for real-time wideband spectrum sensing.

The rest of the paper is organized as follows. In Section
\ref{sec:review}, we provide a brief review of existing compressed
wideband power spectrum estimation methods. In Section
\ref{sec:proposed-method}, a computationally-efficient compressed
power spectrum estimation algorithm is proposed, along with an
analysis of the recovery condition and its computational
complexity. Simulation results are presented in Section
\ref{sec:simulation-results}, followed by concluding remarks in
Section \ref{sec:conclusions}.

\section{Review of State-of-The-Arts} \label{sec:review}
In this section, we provide a brief review of existing compressive
wideband power spectrum estimation methods. Generally, the
methods, according to the way of formulation, can be divided into
a time-domain approach and a frequency-domain approach.

\begin{figure}[!t]
\centering
\includegraphics[width=8cm]{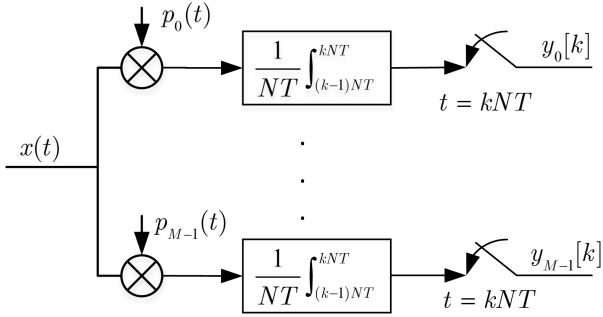}
\caption{Compressed sampling architecture: analog to information
converter (AIC)} \label{fig1}
\end{figure}

\subsection{Time-Domain Power Spectrum Reconstruction Approach}
The time-domain approach reconstructs the power spectrum of a
wide-sense stationary multi-band signal $x(t)$ by exploring a
relationship between the autocorrelation of the Nyquist samples
and the autocorrelation of the sub-Nyquist samples. Here the
sub-Nyquist samples are obtained via a compressed sampling scheme
such as the analog to information converter (AIC)
\cite{LaskaKirolos07} or the multicoset sampling scheme
\cite{VenkataramaniBresler01}. Let us take the AIC for an example.
The AIC (see Fig. \ref{fig1}) consists of a number of sampling
channels, also referred to as branches. In each branch, say branch
$m$, the signal $x(t)$ is modulated by a periodic pseudo-noise
(PN) sequence $p_{m}(t)$ of period $NT$, and then passes through
an integrate-and-dump device with period $NT$. Here $T$ denotes
the Nyquist sampling interval. $N$, a positive integer, is the
downsampling factor of user's choice, i.e. the sampling rate at
each branch is equal to $1/N$ times the Nyquist rate. Assume that
the PN sequence is a piecewise constant signal with constant
values in every interval of length $T$, i.e.
\begin{align}
p_m(t)=c_m[n] \quad nT\leq t<(n+1)T.
\end{align}
The output of the $m$th branch at the $l$th sampling instant can
be expressed as
\begin{align}
y_m[l]=\boldsymbol{c}^T_m\boldsymbol{x}[l],
\end{align}
where
\begin{align}
\boldsymbol{c}_m\triangleq
[c_m[0]\phantom{0}c_m[1]\phantom{0}\ldots\phantom{0}c_m[N-1]]^T,
\end{align}
and
\begin{align}
\boldsymbol{x}[l]\triangleq
[x[lN]\phantom{0}x[lN+1]\phantom{0}\ldots\phantom{0}x[(l+1)N-1]]^T,
\end{align}
in which $x[lN+n]$ is the average level of $x(t)$ at time interval
$[lNT+nT,lNT+(n+1)T]$.

Let $\boldsymbol{y}[l]\triangleq
[y_1[l]\phantom{0}\dots\phantom{0}y_M[l]]^T$ denote the data
sample vector collected at the $l$th sampling time instant. We
have
\begin{align}
\boldsymbol{y}[l]=\boldsymbol{C}\boldsymbol{x}[l],
\label{AIC-input-output-relation}
\end{align}
where $\boldsymbol{C}\triangleq
[\boldsymbol{c}_1\phantom{0}\ldots\phantom{0}\boldsymbol{c}_M]^T$.
Calculating the autocorrelation of $\boldsymbol{y}[l]$ yields:
\begin{align}
\boldsymbol{R}_{y}\triangleq&\mathbb{E}[\boldsymbol{y}[l]\boldsymbol{y}^H[l]]
=\boldsymbol{C}\mathbb{E}[\boldsymbol{x}[l]\boldsymbol{x}^H[l]]\boldsymbol{C}^H\nonumber\\
=&\boldsymbol{C}\boldsymbol{R}_{x}\boldsymbol{C}^H, \label{eqn1}
\end{align}
where
$\boldsymbol{R}_{x}\triangleq\mathbb{E}[\boldsymbol{x}[l]\boldsymbol{x}^H[l]]$.
Note that here $\boldsymbol{x}[l]$ can be considered as a vector
constructed by $N$ consecutive Nyquist samples. Hence (\ref{eqn1})
establishes the relationship between the autocorrelation of the
Nyquist samples and the autocorrelation of the sub-Nyquist
samples. We see that $\boldsymbol{R}_{x}$ is a Toeplitz matrix,
with its $(n+m,n)$th entry equal to $r_x[m]\triangleq
E[x[n+m]x^{\ast}[n]]$. Moreover, as $x(t)$ is a real-valued
signal, we have $r_x[m]=r_x[-m]$. Define
\begin{align}
\boldsymbol{r}_x\triangleq [r_x[0]\phantom{0}
r_x[1]\phantom{0}\dots \phantom{0} r_x[N-1]]^T.
\end{align}
Clearly, we can express the vectorized form of
$\boldsymbol{R}_{x}$ as:
$\text{vec}(\boldsymbol{R}_{x})=\boldsymbol{B}\boldsymbol{r}_x$,
where $\text{vec}(\cdot)$ denotes the vectorization operation,
$\boldsymbol{B}\in\{0,1\}^{N^2\times N}$ is the corresponding
selection matrix. Taking the vectorization of
$\boldsymbol{R}_{y}$, we obtain
\begin{align}
\text{vec}(\boldsymbol{R}_{y})=(\boldsymbol{C}^{*}\otimes\boldsymbol{C})\boldsymbol{B}\boldsymbol{r}_x
\triangleq \boldsymbol{\Phi}\boldsymbol{r}_x, \label{eqn-2}
\end{align}
where $\otimes$ denotes the Kronecker product, and
$\boldsymbol{\Phi}\triangleq
(\boldsymbol{C}^{*}\otimes\boldsymbol{C})\boldsymbol{B}$ is an
$M^2\times N$ matrix. Since $\boldsymbol{B}$ is full column rank,
we can let $M^2>N$ and carefully select the PN sequences such that
$\boldsymbol{\Phi}$ is a full column rank matrix. Thus
$\boldsymbol{r}_x$ can be easily calculated as
\begin{align}
\boldsymbol{r}_x=\boldsymbol{\Phi}^{\dag}\text{vec}(\boldsymbol{R}_{y}),
\end{align}
and the power spectrum of the original signal $x(t)$ can be
reconstructed by taking the discrete Fourier transform (DFT) of
$\boldsymbol{r}_x$.

It should be noted that the resolution of the recovered power
spectrum is determined by the length of the autocorrelation
sequence $\boldsymbol{r}_x$, i.e. $N$. On the other hand, we
require $M^2>N$ to make $\boldsymbol{\Phi}$ invertible and $M$,
the number of sampling branches, cannot be arbitrarily large due
to hardware complexity and cost. As a result, $N$ cannot be chosen
to be very large, which prevents us from obtaining a
high-resolution power spectrum. To address this issue, one can
collect multiple samples $\boldsymbol{y}[l],l=0,\ldots, L-1$ and
stack them into a long vector $\boldsymbol{y}$. Specifically,
define
\begin{align}
\boldsymbol{x}\triangleq &
[\boldsymbol{x}^T[0]\phantom{0}\dots\phantom{0}\boldsymbol{x}^T[L-1]]^T, \nonumber\\
\boldsymbol{y}\triangleq &
[\boldsymbol{y}^T[0]\phantom{0}\dots\phantom{0}\boldsymbol{y}^T[L-1]]^T.
\nonumber
\end{align}
Then we have
\begin{align}
\boldsymbol{y}=(\boldsymbol{I}_L\otimes\boldsymbol{C})\boldsymbol{x}\triangleq
\boldsymbol{\bar{C}}\boldsymbol{x},
\end{align}
where $\boldsymbol{I}_L$ denotes an $L\times L$ identity matrix.
Calculating the autocorrelation of $\boldsymbol{y}$ yields:
\begin{align}
\boldsymbol{\bar{R}}_y=\boldsymbol{\bar{C}}\boldsymbol{\bar{R}}_x\boldsymbol{\bar{C}}^H.
\end{align}
Similarly, the autocorrelation sequence
$\boldsymbol{\bar{r}}_x\triangleq
[r_x[0]\phantom{0}\dots\phantom{0}r_x[LN-1]]^T$ can be estimated
as
\begin{align}
\boldsymbol{\bar{r}}_x=((\boldsymbol{\bar{C}}^{*}\otimes\boldsymbol{\bar{C}})\boldsymbol{\bar{B}})^{\dag}
\text{vec}(\boldsymbol{\bar{R}}_y), \label{eqn5}
\end{align}
where $\boldsymbol{\bar{B}}\in\{0,1\}^{(LN)^2\times LN}$ is the
corresponding selection matrix. Note that the autocorrelation
sequence $\boldsymbol{\bar{r}}_x$ has a dimension of $NL$, from
which we can obtain a power spectrum of length $(2NL-1)$ with a
spectrum resolution $1/((2NL-1)T)$. Thus a desired high-resolution
power spectrum can be achieved by choosing a proper value of $L$.

In summary, the time-domain approach consists of three steps,
namely, calculating the correlation matrix
$\boldsymbol{\bar{R}}_y$, estimating $\boldsymbol{\bar{r}}_x$ via
(\ref{eqn5}), and reconstructing the power spectrum by taking the
Fourier transform of $\boldsymbol{\bar{r}}_x$. The correlation
matrix $\boldsymbol{\bar{R}}_y$ can be estimated as
\begin{align}
\boldsymbol{\hat{\bar{R}}}_y=\frac{1}{P}\sum_{p=1}^{P}\boldsymbol{y}_p\boldsymbol{y}^H_p,
\end{align}
where
\begin{align}
\boldsymbol{y}_p\triangleq &
[\boldsymbol{y}^T[p]\phantom{0}\dots\phantom{0}\boldsymbol{y}^T[L+p-1]]^T.
\end{align}
The matrix
$((\boldsymbol{\bar{C}}^{*}\otimes\boldsymbol{\bar{C}})\boldsymbol{\bar{B}})^{\dag}$
in (\ref{eqn5}) can be computed in advance. Thus we only need to
perform a matrix-vector product to obtain
$\bar{\boldsymbol{r}}_x$. We can easily verify that these three
steps involve $PL^2 M^2$, $L^3 M^2N$, and $(2LN-1)\log(2LN-1)$
floating-point operations, respectively, which scale polynomially
with the number of samples, $L$, making the time-domain approach
infeasible for practical systems. To see this, suppose we would
like to sense a wide frequency band up to 1GHz. Set $M=8$ and
$N=25$. In order to obtain a spectrum resolution of 10kHz, we need
to set $L=4000$, in which case the total number of floating-point
operations is at least as large as $2.5\times10^{13}$, and it will
take a few hours to complete the computational task even with a
high-performance FPGA. We note that a more efficient time-domain
approach was developed in \cite{ArianandaLeus12}, which has a
computational complexity similar to the frequency-domain approach
analyzed below.

\begin{figure}
    \centering
    \includegraphics[width=8cm]{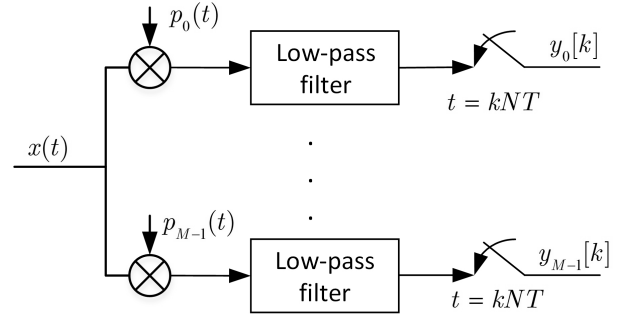}
    \caption{Compressed sampling architecture: modulated wideband converter (MWC)}
    \label{fig2}
\end{figure}

\subsection{Frequency-Domain Power Spectrum Reconstruction Approach}
In addition to the time-domain approach, another approach deals
with the power spectrum estimation problem from a frequency
viewpoint. The frequency-domain approach was originally proposed
in \cite{CohenEldar14}, in which the sub-Nyquist data samples are
obtained via a compressed sampling scheme termed as modulated
wideband converter (MWC) \cite{MishaliEldar09}. The MWC has an
architecture similar to the AIC, but replaces the integrate
devices with low-pass filters. Different from the time-domain
approach, the frequency-domain approach aims to establish the
relationship between the frequency representation of the original
signal and the frequency representation of the compressed samples.

Note that the PN sequence is periodic with a period $NT$. It,
therefore, has a discrete spectrum consisting of an infinite
number of impulses. According to the convolution properties, the
spectrum of the PN sequence-modulated signal in each branch is a
weighted combination of the spectrum of the original signal $x(t)$
and its frequency-shifted versions. By setting the cutoff
frequency of the low-pass filters to $1/(2NT)$, the spectrum of
the low-pass filtered signal at each branch is a superposition of
$N$ segments. Here the $N$ segments are obtained via dividing the
spectrum of the original signal $x(t)$ into $N$ segments of equal
width. Mathematically, the relationship between the spectrum of
the low-pass filtered signal and that of the original signal is
given as \cite{CohenEldar14}
\begin{align}
\boldsymbol{y}(f)=\boldsymbol{A}\boldsymbol{x}(f),
\label{yf-definition}
\end{align}
where the $m$th entry of $\boldsymbol{y}(f)$ denotes the Fourier
transform of the filtered signal at the $m$th branch,
$\boldsymbol{A}$ denotes the mixture matrix determined by the
spectrum of the PN sequences, and the $n$th element of
$\boldsymbol{x}(f)$, denoted by $x_n(f)$, is the $n$th segment of
the spectrum of the original signal, which is given as
\begin{align}
x_n(f)=
\begin{cases}
 X(f+\frac{2n-N-1}{2NT})& \text{if $N$ is odd}\\
 X(f+\frac{2n-N-2}{2NT})& \text{if $N$ is even}
\end{cases},
\end{align}
$\forall f\in (-\frac{1}{2NT},\frac{1}{2NT}]$, where $X(f)$
denotes the frequency spectrum of the original signal $x(t)$.

Calculating the correlation of $\boldsymbol{y}(f)$ gives
\begin{align}
\boldsymbol{R}_y(f)=&\mathbb{E}[\boldsymbol{y}(f)\boldsymbol{y}^H(f)]
=\boldsymbol{A}\mathbb{E}[\boldsymbol{x}(f)\boldsymbol{x}^H(f)]\boldsymbol{A}^H\nonumber\\
\triangleq&\boldsymbol{A}\boldsymbol{R}_x(f)\boldsymbol{A}^H.
\label{eqn3}
\end{align}
For a wide-sense stationary signal $x(t)$, its Fourier transform
$X(f)$ has the following property
\begin{align}
\mathbb{E}[X(f_1)X^{*}(f_2)]=
\begin{cases}
0,&f_1\ne f_2\\
P(f_1),&f_1=f_2
\end{cases},
\end{align}
where $P(f)$ is the power spectrum of $x(t)$. Thus
$\boldsymbol{R}_x(f)$ is a diagonal matrix and (\ref{eqn3}) can be
rewritten as
\begin{align}
\text{vec}(\boldsymbol{R}_y(f))=(\boldsymbol{A}^{*}\odot\boldsymbol{A})\boldsymbol{r}_{x}(f),
\label{eqn4}
\end{align}
where $\odot$ denotes the Khatri-Rao product, and
$\boldsymbol{r}_{x}(f)$ is a vector containing the diagonal
entries of $\boldsymbol{R}_x(f)$. It is easy to see that if
$\boldsymbol{A}^{*}\odot\boldsymbol{A}$ has a full column rank,
then $\boldsymbol{r}_{x}(f)$ can be estimated from
$\boldsymbol{R}_y(f)$, i.e.
\begin{align}
\boldsymbol{r}_{x}(f)=(\boldsymbol{A}^{*}\odot\boldsymbol{A})^{\dag}\text{vec}(\boldsymbol{R}_y(f)).
\label{eqn-6}
\end{align}
One can carefully design the PN sequences such that if $M^2\ge N$,
the matrix $\boldsymbol{A}^{*}\odot\boldsymbol{A}$ has a full
column rank. Note that the $n$th entry of $\boldsymbol{r}_{x}(f)$
is given as $r_{x,n}(f)=\mathbb{E}[x_n(f)x^{*}_n(f)]$. We
uniformly discretize the frequency range $(-1/(2NT),1/(2NT)]$ into
$2L$ grid points, say $f_1,f_2,\ldots,f_{2L}$ and estimate
$\boldsymbol{r}_{x}(f_l),\forall l\in\{1,\ldots,2L\}$ from
(\ref{eqn4}). Accordingly, the power spectrum of the original
signal can be reconstructed based on
$\{\boldsymbol{r}_{x}(f_l)\}$. Specifically, the reconstructed
power spectrum vector has a dimension of $2NL$, with its
$((n-1)L+1)$th entry equal to the $n$th element of
$\boldsymbol{r}_{x}(f_l)$. Clearly, the recovered power spectrum
has a resolution of $1/(2NLT)$. More details about the
frequency-domain approach can be found in \cite{CohenEldar14}.

We see that the frequency-domain method involves computing
$\boldsymbol{y}(f)$, $\boldsymbol{R}_y(f)$, and (\ref{eqn-6}). To
compute $\boldsymbol{y}(f)$, we need to perform $2L$-point DFT of
the time-domain signal for each channel, which needs $2LM\log(2L)$
floating-point operations. For a specific frequency $f_l$,
$\boldsymbol{R}_y(f_l)$ can be estimated as
$\frac{1}{P}\sum_{p=1}^{P}\boldsymbol{y}_p(f_l)\boldsymbol{y}_p(f_l)^H$,
where $\boldsymbol{y}_p(f_l)\in\mathbb{C}^{M\times 1}$ is a sample
vector obtained according to (\ref{yf-definition}). It can be
verified that calculating $\boldsymbol{R}_y(f_l),\forall
l=\{1,\dots,2L\}$ needs $2M^2PL$ floating-point operations. Note
that the pseudo-inverse of
$(\boldsymbol{A}^{*}\odot\boldsymbol{A})$ can be computed
off-line. Thus we only need to perform a matrix-vector product to
compute $\boldsymbol{r}_x(f_l)$ according to (\ref{eqn-6}), which
requires $M^2 N$ floating-point operations. Consequently,
calculating $\boldsymbol{r}_x(f_l),\forall l=\{1,\dots,2L\}$ needs
$2M^2 NL$ floating-point operations. Overall, to obtain a spectrum
resolution of $1/(2NLT)$, the frequency-domain approach involves
$2MLP\log(2L)+2M^2(N+P)L$ floating-point operations in total.
Since we usually have $M\ll L$, the frequency-domain approach has
a much lower computational complexity compared with the
time-domain method. Note that we need $M^2\geq N$ to ensure that
(\ref{eqn4}) is invertible. Hence we have $2M^2(N+P)L>2N^2 L$,
which implies that the computational complexity of the
frequency-domain approach scales polynomially with the
downsampling factor $N$. This makes it unsuitable for the high
compression ratio scenario.

\begin{figure}
    \centering
    \includegraphics[width=8cm]{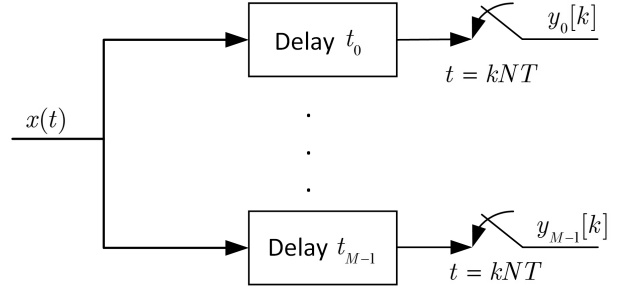}
    \caption{Compressed sampling architecture: multicoset sampling scheme}
    \label{fig3}
\end{figure}

\section{Proposed Fast Compressed Power Spectrum Estimation
Method} \label{sec:proposed-method} As discussed in our previous
section, current compressed power spectrum estimation methods,
particularly the time-domain approach, have a computational
complexity that might be too excessive for real-time spectrum
sensing systems. To address this issue, in this section, we
develop a computationally efficient compressed power spectrum
estimation method. In our proposed method, a multicoset sampling
scheme is employed to collect sub-Nyquist data samples.

Before proceeding, we first introduce the multicoset sampling
scheme. Similar to the AIC and the MWC, the multicoset sampling
architecture consists of a number of sampling channels, also
referred to as branches (see Fig. \ref{fig3}). In the $m$th
branch, the analog signal $x(t)$ is delayed by an amount of time,
$\Delta_m T$, and then sampled by a synchronized low-rate ADC
whose sampling interval is set to $NT$, where $T$ is the Nyquist
sampling interval, and $\Delta_m$ is set to be an integer smaller
than the downsampling factor $N$. Clearly, for the multicoset
sampling scheme, the output of the $m$th branch at the $l$th
sampling instant can be written as
\begin{align}
y_m[l]=x(lNT+\Delta_{m}T)=x[lN+\Delta_{m}],
\end{align}
where $x[n]=x(nT)$ denotes the Nyquist sample of $x(t)$. Define
$\boldsymbol{x}[l]\triangleq[x[lN]\phantom{0} \dots\phantom{0}
x[(l+1)N-1]]^T$, and $\boldsymbol{y}[l]\triangleq
[y_1[l]\phantom{0} \dots\phantom{0}y_{M}[l]]^T$. Then the data
sample collected at the $l$th sampling time instant is given by
\begin{align}
\boldsymbol{y}[l]=\boldsymbol{C}\boldsymbol{x}[l],
\label{mc-input-output-relation}
\end{align}
where $\boldsymbol{C}\in\{0,1\}^{M\times N}$ is a selection matrix
with only one nonzero entry in each row. We see that
(\ref{mc-input-output-relation}) is similar to
(\ref{AIC-input-output-relation}), except that $\boldsymbol{C}$ is
defined in a different way. We can follow the approach discussed
in Section \ref{sec:review} to reconstruct the power spectrum of
the analog signal $x(t)$. But such an approach has a prohibitively
high computational complexity. Notice that the data samples
obtained via the multicoset sampling scheme are a subset of the
Nyquist samples. As will be shown later, this property allows us
to establish an amiable relationship between the autocorrelation
sequence and the sub-Nyquist samples, based on which a fast
compressed power spectrum estimation method can be developed. This
is the reason why we use the multicoset sampler to collect
sub-Nyquist samples for our proposed method.

We see that the data samples $\{y_m[l]\}_{m=1,l=0}^{M,L-1}$
obtained via the multicoset sampling scheme are a subset of the
Nyquist samples $\{x[n]\}_{n=0}^{LN-1}$. To establish the
connection between the sub-Nyquist samples and Nyquist samples, we
define a data sequence $\{h[n]\}_{n=0}^{LN-1}$ and an indicator
sequence $\{I[n]\}_{n=0}^{LN-1}$, respectively, as follows
\begin{align}
h[n]=\begin{cases}
y_m[l],& n=lN+\Delta_{m}\\
0,&\text{otherwise}
\end{cases},
\end{align}
and
\begin{align}
I[n]=\begin{cases}
1,& n=lN+\Delta_{m}\\
0,&\text{otherwise}
\end{cases}.
\end{align}
It is easy to verify that
\begin{align}
h[n]=x[n]I[n].
\end{align}
We now show how to use the sequences $\{h[n]\}$ and $\{I[n]\}$ to
estimate the power spectrum of $\{x[n]\}$. A widely-used unbiased
estimate of the autocorrelation
$r_x[k]=\mathbb{E}[x[n]x^{*}[n-k]]$ is given as
\begin{align}
r_x[k]\approx\frac{1}{|\mathbb{Q}_k|}\sum_{n\in\mathbb{Q}_k}(x[n]x^{*}[n-k])
\quad k\in [-LN+1,LN-1], \label{eqn6}
\end{align}
where $\mathbb{Q}_k\triangleq\{n|0\le n-k\leq LN-1, 0\le n\leq
LN-1\}$, and $|\mathbb{Q}_k|$ denotes the size of $\mathbb{Q}_k$.
Since we only have access to $\{h[n]\}$, we propose a new unbiased
estimate of $\{r_x[k]\}$ which is given as
\begin{align}
r_x[k]\approx&\frac{1}{Q_k}\sum_{n\in\mathbb{\hat{Q}}_k}(x[n]x^{*}[n-k])\nonumber\\
=&\frac{1}{Q_k}\sum_{n\in\mathbb{Q}_k}(h[n]h^{*}[n-k]),
\label{eqn7}
\end{align}
where $\mathbb{\hat{Q}}_k\triangleq\{n|I[n]I[n-k]=1\}$ and
$Q_k\triangleq|\mathbb{\hat{Q}}_k|$. The proposed estimator, which
directly uses the sub-Nyquist samples, is quite simple and easy to
understand. Once the autocorrelation sequence $\{r_x[k]\}$ is
obtained, its power spectrum can readily be given as the Fourier
transform of the autocorrelation sequence.

\subsection{Recovery Condition}
Note that to estimate the power spectrum via (\ref{eqn7}), we have
to make sure that $Q_k>0,\forall k\in [-LN+1,LN-1]$, i.e. for each
$k\in [-LN+1,LN-1]$, the set
$\mathbb{\hat{Q}}_k=\{n|I[n]I[n-k]=1\}$ is a non-empty set. To
ensure this condition is satisfied, the time delays of the
multicoset sampling scheme have to be carefully devised. We have
the following result regarding the choice of the time delays
$\{\Delta_m\}$.

\newtheorem{lemma}{Lemma}
\begin{lemma}
For any integer $n$ satisfying $|n|\le\lfloor\frac{N}{2}\rfloor$,
where $\lfloor x\rfloor$ is the floor function that outputs the
greatest integer less than or equal to $x$, if there exist $m_1,
m_2\in\{1,\ldots,M\}$ and $c\in\{-1,0,1\}$ such that
\begin{align}
n=\Delta_{m_1}-\Delta_{m_2}+cN, \label{multicoset-condition}
\end{align}
then we have $Q_k>0$ for all $k\in [-LN+1,LN-1]$. \label{lemma1}
\end{lemma}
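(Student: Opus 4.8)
The plan is to show that, under the hypothesis, the lag-$k$ overlap set $\hat{\mathbb{Q}}_k=\{n\mid I[n]I[n-k]=1\}$ is nonempty for every $k\in[-LN+1,LN-1]$ by producing one element of it explicitly. First I would reduce to nonnegative lags: the shift $n\mapsto n-k$ is a bijection from $\hat{\mathbb{Q}}_k$ onto $\hat{\mathbb{Q}}_{-k}$ (it preserves both indicator conditions and the index range $[0,LN-1]$), so $Q_k=Q_{-k}$ and it suffices to take $0\le k\le LN-1$. Then I would restate the goal combinatorially. Since $I[n]=1$ exactly when $n=lN+\Delta_m$ for some $l\in\{0,\ldots,L-1\}$, $m\in\{1,\ldots,M\}$, and any such $n$ lies automatically in $[0,LN-1]$, the inequality $Q_k>0$ is equivalent to the solvability of
\begin{align}
(l_1N+\Delta_{m_1})-(l_2N+\Delta_{m_2})=k,\quad l_1,l_2\in\{0,\ldots,L-1\},\ m_1,m_2\in\{1,\ldots,M\}.
\end{align}
Thus the lemma amounts to saying that the hypothesis places the whole interval $[-LN+1,LN-1]$ inside the difference set of the sampling grid $\{lN+\Delta_m\}$.

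Next I would build the witnessing pair. Given $k\in[0,LN-1]$, let $n$ be the representative of $k$ modulo $N$ with $|n|\le\lfloor N/2\rfloor$ (either of $\pm N/2$ is allowed when $N$ is even and $k\equiv N/2$). Applying the hypothesis to this $n$ yields $m_1,m_2$ and $c\in\{-1,0,1\}$ with $\Delta_{m_1}-\Delta_{m_2}=n-cN$; since $n\equiv k\pmod N$, the integer $q:=(k-n)/N$ is well defined, and substituting gives
\begin{align}
k=(q+c)N+(\Delta_{m_1}-\Delta_{m_2}).
\end{align}
Hence, provided $q+c\in\{-(L-1),\ldots,L-1\}$, I can choose $l_1,l_2\in\{0,\ldots,L-1\}$ with $l_1-l_2=q+c$ (for instance $l_2=0$ and $l_1=q+c$ when $q+c\ge0$), and then $(l_1,m_1),(l_2,m_2)$ witness $Q_k>0$.

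The main obstacle will be the range check $q+c\in\{-(L-1),\ldots,L-1\}$, which can fail only at the extreme lags. For $k$ bounded away from $LN-1$ one has $q\le L-1$, so the slack $c\in\{-1,0,1\}$ does no harm; but when $k$ lies within $\lfloor N/2\rfloor$ of $LN-1$ the balanced remainder $n$ is negative and $q$ may equal $L$, forcing the use of $c=-1$ — for $k=LN-1$, say, the representative is $n=-1$ and the construction needs a pair with $\Delta_{m_1}=N-1$, $\Delta_{m_2}=0$, which just says that the end-to-end lag $LN-1$ is realized only by $x[0]$ and $x[LN-1]$. I would dispatch these finitely many boundary lags by combining the symmetry $Q_k=Q_{-k}$ with the particular sign of $c$ that the hypothesis supplies for each residue class (and, for the two endmost lags, with the presence of the offsets $0$ and $N-1$ among the delays); the interior lags are then just routine modular bookkeeping.
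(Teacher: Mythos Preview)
Your plan is essentially the paper's Appendix~A argument: reduce modulo $N$ and exhibit an explicit witness $n=l_1N+\Delta_{m_1}$, $n-k=l_2N+\Delta_{m_2}$. The paper packages the reduction as periodicity of $I[\cdot]$ (``if $Q_k>0$ then $Q_{k+aN}>0$''), you as a difference-set reformulation, but the content is identical.

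You are right that the range check on $q+c$ is the crux and that it can fail near $|k|=LN-1$; however, your proposed remedy---invoking ``the presence of the offsets $0$ and $N-1$ among the delays''---does not follow from the hypothesis, and in fact no remedy exists: the claim is false at the extreme lags. Take the paper's own example $N=8$, $\{\Delta_m\}=\{0,2,3,4\}$; the delay differences all lie in $\{-4,\ldots,4\}$, so the largest realizable lag is $8(L-1)+4=8L-4$ and $Q_{8L-3}=Q_{8L-2}=Q_{8L-1}=0$. The second example $\{0,3,5,7\}$ already fails at $k=8L-7$, because the only delay difference congruent to $1$ modulo $8$ is $-7$, which forces $l_1-l_2=L$. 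What the circular-sparse-ruler hypothesis actually yields is $Q_k>0$ for all $|k|\le (L-1)N$; covering the last block $(L-1)N<|k|\le LN-1$ would require every $s\in\{1,\ldots,N-1\}$, not merely one representative per residue class, to occur as a positive delay difference. The paper's proof glosses over exactly this point (it tacitly assumes the shifted index stays in $[0,LN-1]$), so the defect lies in the stated range of $k$, not in your strategy.
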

\begin{proof}
See Appendix \ref{appA}.
\end{proof}

\emph{Remarks:} We note that the recovery condition
(\ref{multicoset-condition}) is slightly relaxed than the sparse
ruler condition given in \cite{ArianandaLeus12}, and is equivalent
to the recovery condition given in \cite{RomeroLeus13} which is
termed as the circular sparse ruler. For the sparse ruler
condition \cite{ArianandaLeus12}, it requires the set
$\{0,1,\dots,\lfloor N/2\rfloor\}$ to be a subset of
$\{|\Delta_{m_1}-\Delta_{m_2}|:0<m_1\le M, 0<m_2\le M\}$. Such a
condition amounts to saying that for any integer $|n|\leq\lfloor
N/2\rfloor$, their exist $m_1$ and $m_2$ such that
\begin{align}
n=\Delta_{m_1}-\Delta_{m_2}. \label{multicoset-condition-2}
\end{align}
We see that the condition (\ref{multicoset-condition-2}) is the
same as (\ref{multicoset-condition}) if we set $c=0$. This means
our condition (\ref{multicoset-condition}) is more relaxed than
the condition discussed in \cite{ArianandaLeus12}. For example,
let $N=8$, $M=4$. Then we have two different delay sets
$\{\Delta_{m}\}=\{0,2,3,4\}$ and $\{\Delta_{m}\}=\{0,3,5,7\}$,
both of which satisfy our condition (\ref{multicoset-condition}).
Nevertheless, only the first delay set satisfies
(\ref{multicoset-condition-2}). Finding a smallest number of
branches, $M$, such that (\ref{multicoset-condition}) can be met
is referred to as the circular sparse ruler problem and has been
investigated in previous studies, e.g.
\cite{GonzalezDominguez14,DominguezGonzalez14}. Specifically, in
\cite{YenTsai13}, it was shown that if $M\ge(N+1)/2$, then one can
always find a multicoset sampling pattern satisfying
(\ref{multicoset-condition-2}), and consequently
(\ref{multicoset-condition}). Furthermore,
\cite{GonzalezDominguez14,DominguezGonzalez14} proposed some
special universal sampling patterns for the scenario where
$M<(N+1)/2$. Note that the condition (\ref{multicoset-condition})
guarantees the recovery of the power spectrum, without placing any
sparsity constraint on the spectrum under monitoring. In contrast,
the earlier work \cite{FengBresler96} discussed a universal
sampling pattern that ensures perfect recovery of a multi-band
signal with a given spectral occupancy bound.

\subsection{Proposed Algorithm}
Note that directly calculating $\{r_x[k]\}_{k=-LN+1}^{LN-1}$ via
(\ref{eqn7}) will need a total number of $2L^2N^2+LN$
floating-point operations. Such a computational complexity would
become unacceptable when $L$ is large, which is usually the case
in order to provide a fine spectrum resolution. We now show how to
use FFT to reduce the computational complexity of the proposed
estimator (\ref{eqn7}). Define
\begin{align}
r_h[k]=\sum_{n\in\mathbb{Q}_k}(h[n]h^{*}[n-k]), \label{eqn13}
\end{align}
where $\mathbb{Q}_k=\{n|0\le n-k\leq LN-1, 0\le n\leq LN-1\}$. The
above equation can be rewritten as
\begin{align}
r_h[k]=\sum_{n=0}^{LN-1}(h[n]h^{*}[n-k]),
\end{align}
by setting $h[n]=0$ for $n<0$ and $n\ge LN$. We define a new
sequence $\{\bar{h}[n]\}_{n=-LN+1}^{LN-1}$ as
\begin{align}
\bar{h}[n]=\begin{cases}
h[n],& LN-1\geq n\ge0\\
0,&-LN+1\leq n<0
\end{cases},
\end{align}
and let $\{\hat{h}[n]\}_{n=-LN+1}^{LN-1}$ be the reverse of
$\{\bar{h}[n]\}_{n=-LN+1}^{LN-1}$, i.e
\begin{align}
\hat{h}[n]=\bar{h}[-n].
\end{align}
Then we have
\begin{align}
r_h[k]&=\begin{cases}
\sum_{n=-LN+1+k}^{LN-1}(\bar{h}[n]\hat{h}^{*}[k-n]),&k\ge0\\
\sum_{n=-LN+1}^{LN-1+k}(\bar{h}[n]\hat{h}^{*}[k-n]),&k<0\\
\end{cases}\nonumber\\
&\stackrel{(a)}{=}\sum_{n=-LN-1}^{LN-1}(\bar{h}[n]\hat{h}_{P}^{*}[k-n])\nonumber\\
&=(\bar{h}\star\hat{h}^{*})[k],
\end{align}
where $P\triangleq 2NL-1$, $\hat{h}_{P}[n]$ is a periodic
summation of $\hat{h}[n]$ defined as
\begin{align}
\hat{h}_{P}[n]\triangleq\sum_{k=-\infty}^{+\infty}\hat{h}[n-kP],
\end{align}
$(a)$ comes from the fact that $\bar{h}[n]=0$ for $-LN+1\le n<0$
and $\hat{h}_{P}[n]=\hat{h}[n-NL]=0$ for $-2NL+1< n\le -NL$, and
the symbol $\star$ in the last equality denotes the circular
convolution.

Define
\begin{align}
\boldsymbol{r}_h&\triangleq[r_h[-LN+1]\phantom{0}\dots\phantom{0}r_h[LN-1]]^T,\nonumber\\
\boldsymbol{\bar{h}}&\triangleq[\bar{h}[-LN+1]\phantom{0}\dots\phantom{0}\bar{h}[LN-1]]^T,\nonumber\\
\boldsymbol{\hat{h}}&\triangleq[\hat{h}[-LN+1]\phantom{0}\dots\phantom{0}\hat{h}[LN-1]]^T.\nonumber
\end{align}
Invoking the circular convolution theorem, we have
\begin{align}
\boldsymbol{F}_{2NL-1}\boldsymbol{r}_h=
(\boldsymbol{F}_{2NL-1}\boldsymbol{\bar{h}})\circ(\boldsymbol{F}_{2NL-1}\boldsymbol{\hat{h}}),
\end{align}
where $\boldsymbol{F}_{2NL-1}$ and $\circ$ denote the
$(2NL-1)$-point discrete Fourier transform (DFT) matrix and the
element-wise product, respectively.

As the sequence $\{\hat{h}[n]\}$ is the time reversal of the
sequence $\{\bar{h}[n]\}$, according to the time reversal and
complex-conjugate properties of DFT, the DFT of the sequence
$\{\hat{h}^{*}[n]\}$ is the complex conjugate of the DFT of the
sequence $\{\bar{h}[n]\}$ \cite{Proakis01}. Therefore we have
\begin{align}
(\boldsymbol{F}_{2NL-1}\boldsymbol{\bar{h}})\circ(\boldsymbol{F}_{2NL-1}
\boldsymbol{\hat{h}})=|\boldsymbol{F}_{2NL-1}\boldsymbol{\bar{h}}|^2,
\end{align}
where $|\cdot|^2$ denotes the element-wise square modulus of a
complex vector. Thus $\boldsymbol{r}_h$ can be computed as
\begin{align}
\boldsymbol{r}_h=\boldsymbol{F}^{-1}_{2NL-1}|\boldsymbol{F}_{2NL-1}\boldsymbol{\bar{h}}|^2.
\label{eqn11}
\end{align}
By resorting to the fast Fourier transform (FFT),
$\boldsymbol{r}_h$ can be efficiently calculated.

Recalling that $Q_k=|\mathbb{\hat{Q}}_k|$ and
$\mathbb{\hat{Q}}_k=\{n|I[n]I[n-k]=1\}$, $Q_k$ can be expressed as
\begin{align}
Q_k=\sum\limits_{n\in\mathbb{Q}_k}I[n]I[n-k] \label{eqn12}
\end{align}
Notice that (\ref{eqn12}) has a form similar to (\ref{eqn13}).
Therefore by following the same approach of calculating
(\ref{eqn13}), $\{Q_k\}$ can also be efficiently computed via
FFTs. Specifically, define
\begin{align}
\boldsymbol{q}&\triangleq[Q_{-LN+1}\phantom{0}\dots\phantom{0}Q_{LN-1}]^T,\nonumber\\
\boldsymbol{\bar{I}}&\triangleq[\bar{I}[-LN+1]\phantom{0}\dots\phantom{0}\bar{I}[LN-1]]^T,\nonumber
\end{align}
where
\begin{align}
\bar{I}[n]=\begin{cases}
I[n],& LN-1\geq n\ge0\\
0,&-LN+1\leq n<0
\end{cases}.
\end{align}
Then $\boldsymbol{q}$ can be computed via
\begin{align}
\boldsymbol{q}=\boldsymbol{F}^{-1}_{2NL-1}|\boldsymbol{F}_{2NL-1}\boldsymbol{\bar{I}}|^2.
\label{q-sequence}
\end{align}

After obtaining the sequences $\{Q_k\}$ and $\{r_h[k]\}$, the
power spectrum of the original signal $x(t)$ can be readily
estimated by taking the DFT of $\{r_x[k]\}$, in which we have
$r_x[k]=r_h[k]/Q_k$ according to (\ref{eqn7}). For clarity, we
plot the block diagram of our proposed wideband power spectrum
estimation method in Fig.\ref{fig4}.

\begin{figure*}[t]
    \centering
    \includegraphics[width=16cm]{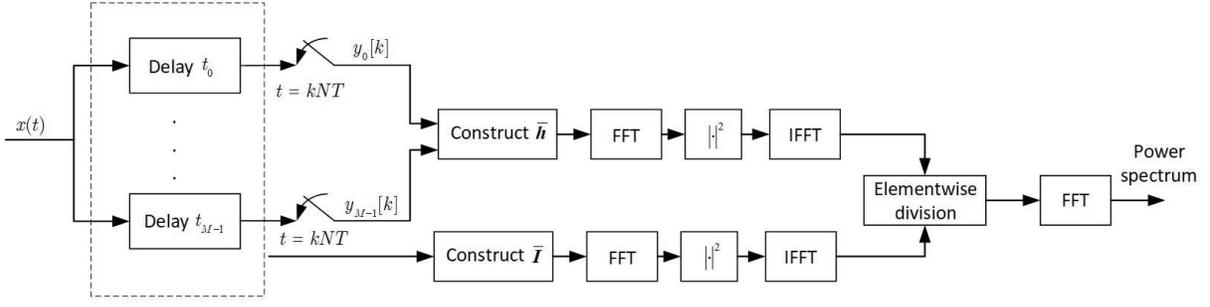}
    \caption{Block diagram of the proposed compressed power spectrum
        estimation method}
    \label{fig4}
\end{figure*}

\subsection{Computational Complexity}
We see that our proposed compressed power spectrum estimation
method only involves FFT/IFFT operations and some simple
multiplication calculations. More precisely, to obtain the power
spectrum of $x(t)$ with a spectrum resolution of $1/((2NL-1)T)$,
we only need to calculate $\boldsymbol{r}_h$ and $\boldsymbol{q}$
via (\ref{eqn11}) and (\ref{q-sequence}), respectively, and then
compute the FFT of the sequence
$\boldsymbol{r}_x=\boldsymbol{r}_h./\boldsymbol{q}$, where $./$
stands for the element-wise division. Note that $\boldsymbol{q}$
can be pre-calculated since it is only dependent on the sampling
pattern of the multi-coset scheme. Therefore the proposed method
only requires to execute the FFT of a $(2LN-1)$-point sequence
three times, plus $2LN-1$ multiplication calculations. It can be
easily checked that the proposed method involves $(6LN-3)\log
(2LN-1)+2LN-1$ floating-point operations in total, which scales
linearly with the number of samples (in time) $L$ and the
downsampling factor $N$. Clearly, our proposed method has a lower
computational complexity than existing methods that have a
complexity either scaling polynomially with $L$ or scaling
polynomially with $N$. In addition, an important advantage of our
proposed method over existing methods is that our proposed method,
where FFT is the computationally dominant task, can be efficiently
computed in a parallel manner. Such a merit makes it possible to
develop a practical system to perform real-time wideband spectrum
sensing. According to \cite{FFTIP}, a high-performance FPGA such
as Xilinx Virtex-6 can complete a complex $32000$-point FFT within
$0.4$ millisecond (ms). Since our proposed method involves three
sequential FFT operations, it can reconstruct the power spectrum
of a frequency band of 1GHz with a reasonably fine spectrum
resolution of $62.5$kHz within $1.2$ms, which meets most real-time
sensing applications.

\subsection{Performance Analysis}
In this subsection, we provide a theoretical analysis of the mean
square error (MSE) of the proposed power spectrum method.
Specifically, the MSE can be calculated as
\begin{align}
\text{MSE}=\mathbb{E}[\|\boldsymbol{\hat{s}}-\boldsymbol{s}\|_2^2],
\end{align}
where $\boldsymbol{\hat{s}}$ and $\boldsymbol{s}$ denote the
estimated power spectrum and the true one, respectively. The MSE
can be further expressed as
\begin{align}
\mathbb{E}[\|\boldsymbol{\hat{s}}-\boldsymbol{s}\|_2^2]=&\mathbb{E}[\|\boldsymbol{\hat{s}}\|_2^2]-
2\mathbb{E}[\boldsymbol{\hat{s}}]^T\boldsymbol{s}+\|\boldsymbol{s}\|_2^2\nonumber\\
=&\mathbb{E}[\|\boldsymbol{F}\boldsymbol{r}_x\|_2^2]-2\mathbb{E}[\boldsymbol{\hat{s}}]^H\boldsymbol{s}
+\|\boldsymbol{s}\|_2^2\nonumber\\
=&\mathbb{E}[\|\boldsymbol{r}_x\|_2^2]-2\mathbb{E}[\boldsymbol{\hat{s}}]^H\boldsymbol{s}+\|\boldsymbol{s}\|_2^2\nonumber\\
\stackrel{(a)}{=}&\|\mathbb{E}[\boldsymbol{r}_x]\|_2^2+\boldsymbol{1}^T
\mathbb{D}(\boldsymbol{r}_x)-2\mathbb{E}[\boldsymbol{\hat{s}}]^H\boldsymbol{s}+\|\boldsymbol{s}\|_2^2\nonumber\\
=&\|\boldsymbol{F}\mathbb{E}[\boldsymbol{r}_x]\|_2^2+\boldsymbol{1}^T
\mathbb{D}(\boldsymbol{r}_x)-2\mathbb{E}[\boldsymbol{\hat{s}}]^H\boldsymbol{s}+\|\boldsymbol{s}\|_2^2\nonumber\\
=&\|\mathbb{E}[\boldsymbol{\hat{s}}]\|_2^2+\boldsymbol{1}^T
\mathbb{D}(\boldsymbol{r}_x)-2\mathbb{E}[\boldsymbol{\hat{s}}]^H\boldsymbol{s}+\|\boldsymbol{s}\|_2^2\nonumber\\
=&\|\mathbb{E}[\boldsymbol{\hat{s}}]-\boldsymbol{s}\|_2^2+\boldsymbol{1}^T\mathbb{D}(\boldsymbol{r}_x),
\end{align}
where $\boldsymbol{r}_x$ denotes the estimated autocorrelation
vector, $\boldsymbol{F}$ is the DFT matrix, and $(a)$ comes from
\begin{align}
\mathbb{E}[\|\boldsymbol{r}_x\|_2^2]
=\|\mathbb{E}[\boldsymbol{r}_x]\|_2^2+\boldsymbol{1}^T\mathbb{D}(\boldsymbol{r}_x),
\end{align}
in which $\mathbb{D}(\boldsymbol{r}_x)$ denotes the operation
taking the variance of each element of $\boldsymbol{r}_x$, and
$\boldsymbol{1}$ denotes the vector with all of its entries equal
to 1. Since our estimator (\ref{eqn7}) is an unbiased estimator,
we have $\mathbb{E}(\boldsymbol{\hat{s}})=\boldsymbol{s}$. The MSE
of our proposed estimator can therefore be given as
\begin{align}
\text{MSE}=\boldsymbol{1}^T\mathbb{D}(\boldsymbol{r}_x).
\end{align}
Thus we only need to evaluate the variance of the elements of
$\boldsymbol{r}_x$. For $r_x[k]$, its variance can be computed as
\begin{align}
&\mathbb{D}(r_x[k]) \nonumber\\
=&\mathbb{E}[|r_x[k]|^2]-|\mathbb{E}[r_x[k]]|^2\nonumber\\
=&\mathbb{E}\bigg[\bigg|\frac{1}{Q_k}\sum_{n\in\mathbb{Q}_k}(h[n]h^{*}[n-k])\bigg|^2\bigg]-
|\mathbb{E}[r_x[k]]|^2\nonumber\\
=&\mathbb{E}\bigg[\frac{1}{Q_k^2}\sum_{n\in\mathbb{Q}_k}
\sum_{m\in\mathbb{Q}_k}(h[n]h^{*}[n-k]h^{*}[m]h[m-k])\bigg]
-|\mathbb{E}[r_x[k]]|^2\nonumber\\
=&\mathbb{E}\bigg[\frac{1}{Q_k^2}\sum_{n\in
\hat{\mathbb{Q}}_{k},m\in\hat{\mathbb{Q}}_{k}}(x[n]x^{*}[n-k]x^{*}[m]x[m-k])\bigg]
-|\mathbb{E}[r_x[k]]|^2.
\end{align}
Observe that calculation of the variance of $r_{x}[k]$ is not
trivial as it involves fourth-order moments of the multi-band
signal $x[n]$, which requires the knowledge of the distribution of
$x[n]$. If $x[n]$ is Gaussian distributed, the fourth-order
moments can be simplified as a sum of products of second order
moments \cite{ArianandaLeus12}.

In the following, we consider the special case where $x(t)$ is a
temporally white signal with zero mean and variance $\sigma^2$.
When $k\ne0$, we have
\begin{align}
&\mathbb{D}(r_x[k]) \nonumber\\
=&\mathbb{E}\bigg[\frac{1}{Q_k^2}
\sum_{n\in\hat{\mathbb{Q}}_{k},m\in\hat{\mathbb{Q}}_{k}}(x[n]x^{*}[n-k]x^{*}[m]x[m-k])\bigg]-
|\mathbb{E}[r_x[k]]|^2\nonumber\\
=&\frac{1}{Q_k^2}\sum_{n\in\hat{\mathbb{Q}}_{k},m\in\hat{\mathbb{Q}}_{k}}
\mathbb{E}\left[x[n]x^{*}[m]x[n-k]x^{*}[m-k]\right]-|\mathbb{E}[r_x[k]]|^2\nonumber\\
=&\frac{1}{Q_k^2}\sum_{n\in\hat{\mathbb{Q}}_{k}}\mathbb{E}[x[n]x^{*}[n]]
\mathbb{E}[x[n-k]x^{*}[n-k]]-|\mathbb{E}[r_x[k]]|^2\nonumber\\
=&\frac{1}{Q_k^2}\sum_{n\in\hat{\mathbb{Q}}_{k}}\sigma^4\nonumber\\
=&\frac{1}{Q_k}\sigma^4.
\end{align}
When $k=0$, we have
\begin{align}
\mathbb{D}(r_x[0])=&\mathbb{E}\bigg[\frac{1}{Q_0^2}\sum_{n\in
\hat{\mathbb{Q}}_{0},m\in\hat{\mathbb{Q}}_{0}}(x[n]x^{*}[n]x^{*}[m]x[m])\bigg]-|\mathbb{E}[r_x[0]]|^2\nonumber\\
=&\mathbb{E}\bigg[\frac{1}{Q_0^2}\sum_{n\in\hat{\mathbb{Q}}_{0}}(x[n]x^{*}[n]x^{*}[n]x[n])\bigg]
\nonumber\\
&+\mathbb{E}\bigg[\frac{1}{Q_0^2}\sum_{n\in\hat{\mathbb{Q}}_{0},m\in\hat{\mathbb{Q}}_{0},m\ne
n}
(x[n]x^{*}[n]x^{*}[m]x[m])\bigg]-\sigma^4\nonumber\\
=&\frac{1}{Q_0^2}\sum_{n\in\hat{\mathbb{Q}}_{0}}\mathbb{E}\left[x[n]x^{*}[n]x^{*}[n]x[n]\right]
\nonumber\\
&+\frac{1}{Q_0^2}\sum_{n\in\hat{\mathbb{Q}}_{0},m\in\hat{\mathbb{Q}}_{0},m\ne
n}
\mathbb{E}[x[n]x^{*}[n]]\mathbb{E}[x^{*}[m]x[m]]-\sigma^4\nonumber\\
=&\frac{2\sigma^4}{Q_0}+\frac{1}{Q_0^2}(\sigma^4(Q_0^2-Q_0))-\sigma^4\nonumber\\
=&\frac{\sigma^4}{Q_0}.
\end{align}
Therefore the MSE of the proposed method for this special scenario
is given by
\begin{align}
\text{MSE}=\boldsymbol{1}^T\mathbb{D}(\boldsymbol{r}_x)=\sigma^4\sum_k1/Q_k.
\end{align}
We see that the MSE of the proposed method is related to the
number of collected data samples as well as the sampling pattern.
It is easy to see that more data samples lead to a lower MSE.
Also, given a fixed downsampling factor $N$, increasing the number
of branches $M$ results in larger $\{Q_k\}$ and thus a higher
estimation accuracy.

\subsection{Discussions} \label{sec:proposed-method-discussion}
Our proposed method is based on the assumption that the multi-band
signal $x(t)$ is wide-sense stationary. Note that wide-sense
stationarity is an assumption widely adopted for compressed power
spectrum estimation and spectrum sensing, e.g.
\cite{ArianandaLeus12,YenTsai13,CohenEldar14}. Nevertheless, in
practice, the signal of interest might be nonstationary or
cyclostationary. For nonstationary signals, they may have slowly
time-varying statistics. In this case, they can be treated as
wide-sense stationary signals within a sufficiently short period
of time. Also, our proposed method is applicable to communication
signals which are known to be cyclostationary. To explains this,
let $r(t,\tau)=E[x(t)x(t-\tau)]$ denote the autocorrelation of a
random process $x(t)$. For cyclostationary signals, $r(t,\tau)$ is
cyclic in $t$ and can be expanded in Fourier series:
\begin{align}
r(t,\tau)=\sum_{\alpha=-\infty}^{+\infty}r^{\alpha}(\tau)e^{j2\pi\alpha
t},
\end{align}
where $r^{\alpha}(\tau)$ is called the cyclic autocorrelation
function defined as
\begin{align}
r^{\alpha}(\tau)=\int_{-\infty}^{+\infty}r(t,\tau)e^{-j2\pi\alpha
t}dt.
\end{align}
The cyclic spectrum $S(\alpha,f)$ used to analyze the
cyclostationary signal can be calculated as the Fourier transform
of the cyclic autocorrelation function at cyclic frequency
$\alpha$, i.e.
\begin{align}
S(\alpha,f)=&\int_{-\infty}^{+\infty}r^{\alpha}(\tau)e^{-j2\pi f
\tau}d\tau \nonumber\\
=&
\int_{-\infty}^{+\infty}\int_{-\infty}^{+\infty}r(t,\tau)e^{-j2\pi
f \tau}e^{-j2\pi\alpha t}d\tau d t.
\end{align}
The cyclic spectrum at zeroth cyclic frequency ($\alpha=0$), also
called average power spectral density, is therefore given as
\begin{align}
S(0,f)=&\int_{-\infty}^{+\infty}\int_{-\infty}^{+\infty}r(t,\tau)e^{-j2\pi
f \tau} d t d\tau \nonumber\\
=&\int_{-\infty}^{+\infty}\left(\int_{-\infty}^{+\infty}r(t,\tau)d t\right)e^{-j2\pi f \tau}d\tau\nonumber\\
=&\int_{-\infty}^{+\infty}\hat{r}(\tau)e^{-j2\pi f \tau} d t
d\tau, \label{eqn14}
\end{align}
where
\begin{align}
\hat{r}(\tau)=\int_{-\infty}^{+\infty}r(t,\tau)d t.
\end{align}
From (\ref{eqn14}), we see that the average power spectrum of a
cyclostationary signal can be calculated as the Fourier transform
of $\hat{r}(\tau)$. Note that the autocorrelation $r_x[k]$
estimated via (\ref{eqn6}) or (\ref{eqn7}) can actually be
considered as an estimate of $\hat{r}(\tau)$ for cyclostationary
signals. Therefore our proposed method is applicable to
cyclostationary signals and renders an estimate of its average
power spectrum, i.e. its cyclic spectrum at zeroth cyclic
frequency.

\begin{figure*}[t]
    \centering
    \subfloat[]{
        \includegraphics [width=200pt]{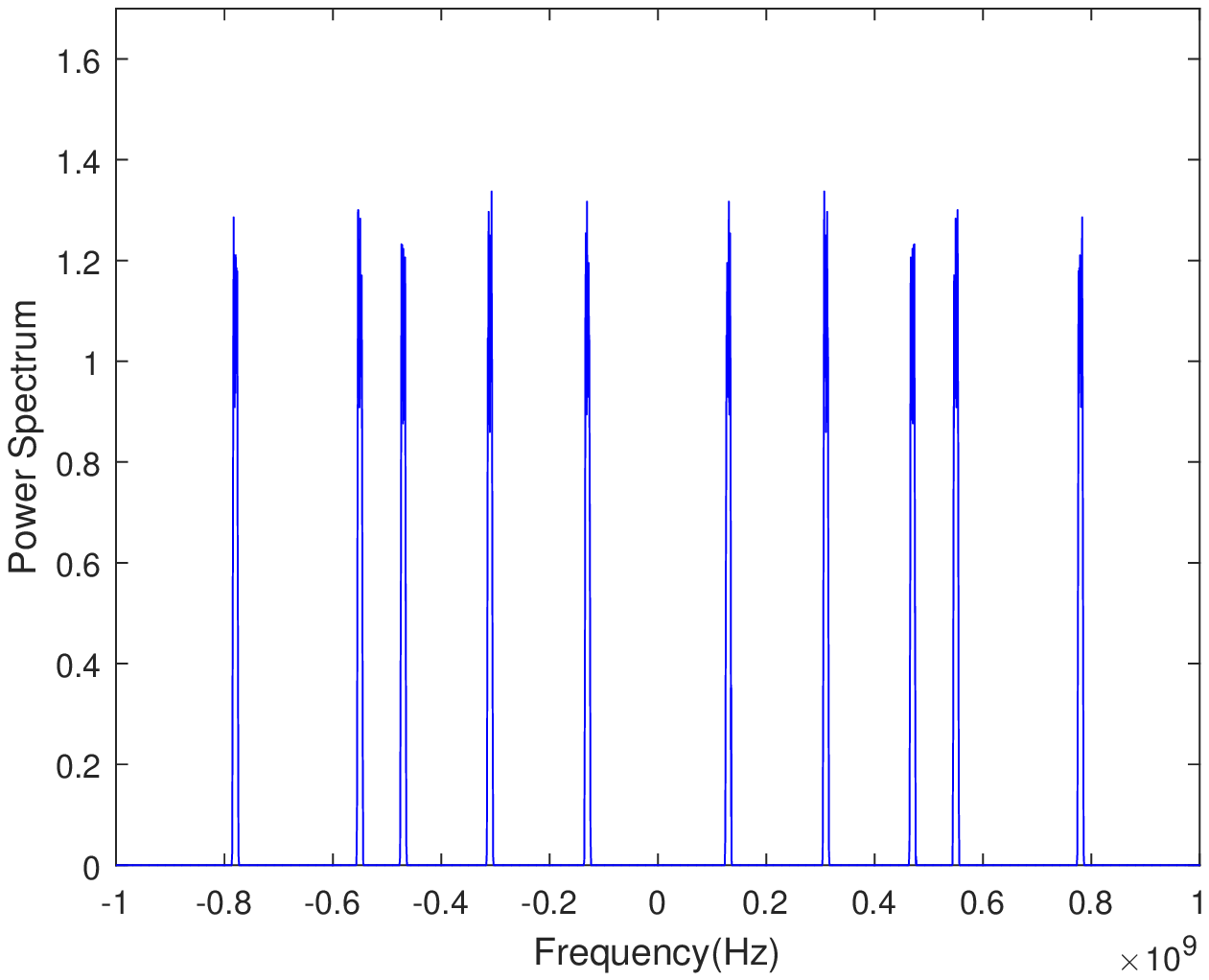}
    }
    \hspace{20pt}
    \subfloat[]{
        \includegraphics [width=200pt]{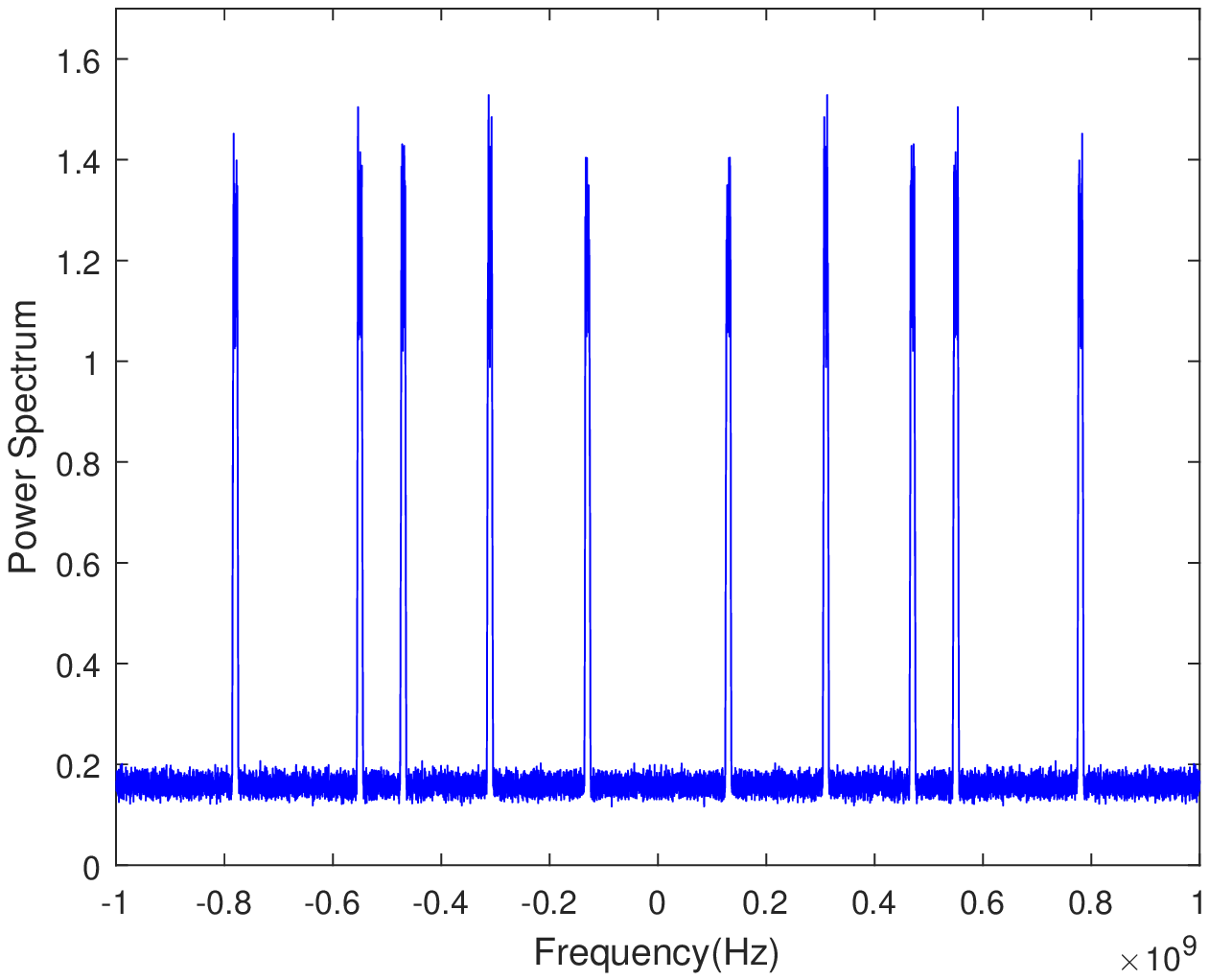}
    }
    \\
    \subfloat[]{
        \includegraphics [width=200pt]{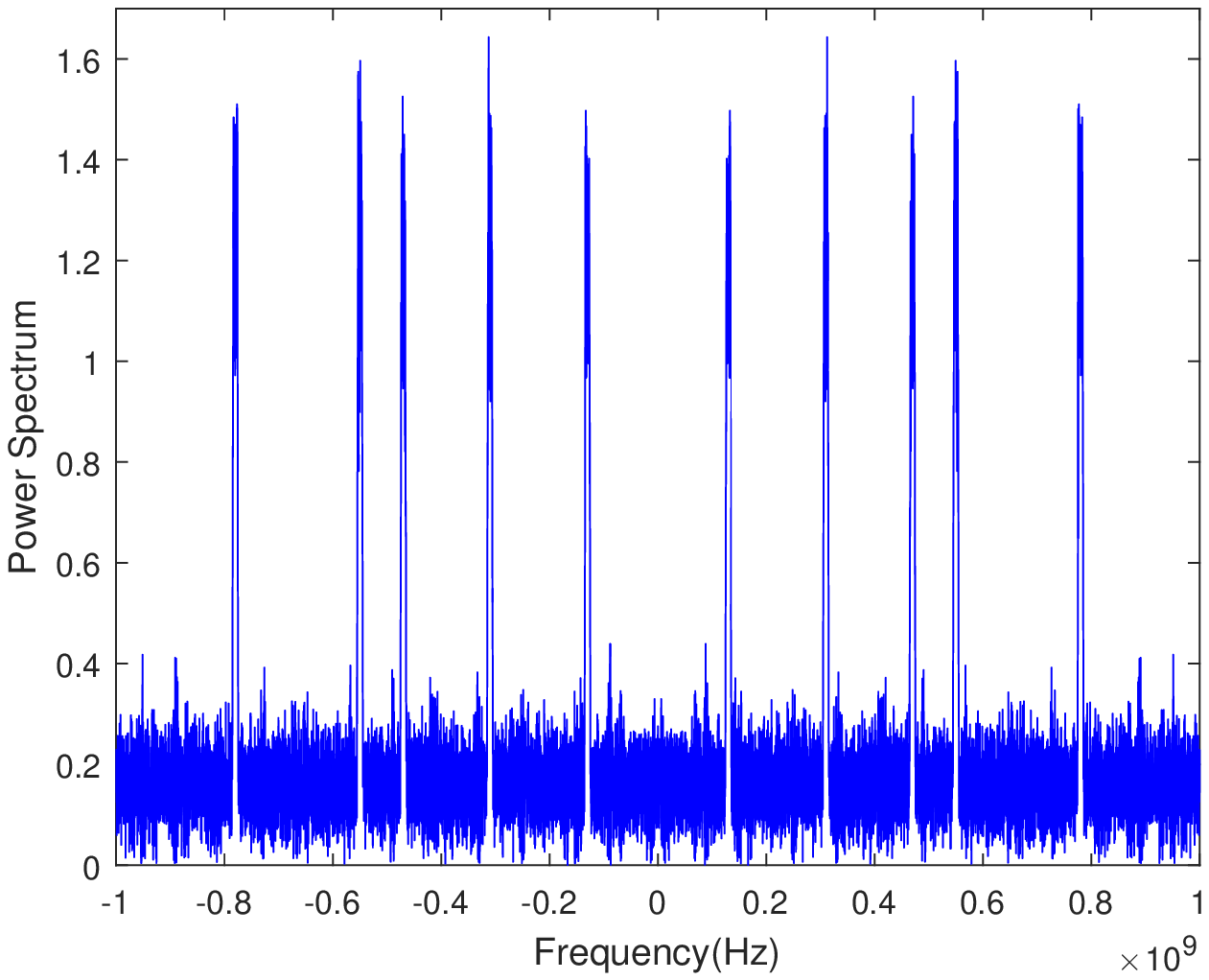}
    }
    \hspace{20pt}
    \subfloat[]{
        \includegraphics [width=200pt]{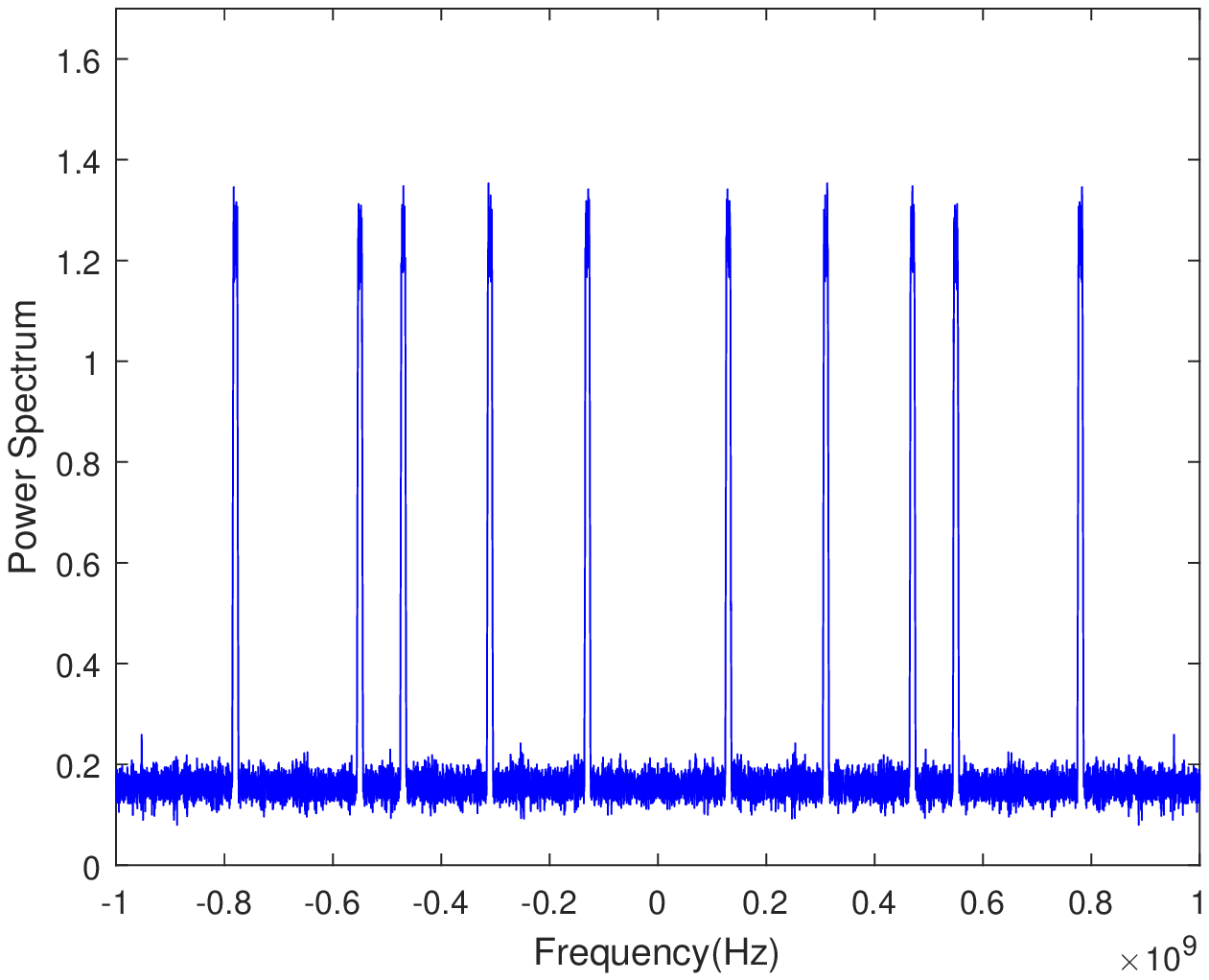}
    }
    \caption{(a) Power spectrum reconstructed using noiseless Nyquist samples collected within 1ms;
    (b) Power spectrum reconstructed using noisy Nyquist samples collected within 1ms;
        (c) Power spectrum reconstructed via our proposed method
        with 1ms noisy sub-Nyquist samples; (d) Power spectrum
        reconstructed via our proposed method with 10ms noisy sub-Nyquist samples}
    \label{fig5}
\end{figure*}

\section{Simulation Results} \label{sec:simulation-results}
In this section, we carry out experiments to show the efficiency
and effectiveness of the proposed compressed power spectrum
estimation method. In the following, we mainly report results on
wideband spectrum sensing of wide-sense stationary signals.
Results on cyclostationary signals are also included in order to
corroborate our claim and analysis in Section
\ref{sec:proposed-method-discussion}.

\subsection{Results on Wide-Sense Stationary Signals}
In our experiments, we generate $5$ wide-sense stationary signals
within the frequency range $[0,1]$GHz by filtering the zero-mean
unit-variance Gaussian white noise using band-pass FIR filters.
Our objective is to identify the frequency locations of the
signals that spread over the frequency band $[0,1]$GHz. Clearly,
the Nyquist sampling rate is $2$GHz. The bandwidth of each signal
component is set to $10$MHz and their carrier frequencies are set
to $130$, $310$, $470$, $550$, and $780$MHz, respectively. For our
proposed method, a multicoset sampling scheme is employed to
collect sub-Nyquist data samples, in which the number of sampling
channels is set to $M=8$, the sampling rate for each channel is
set to $80$MHz and the time delays are set to $\{0, 0.5, 1, 1.5,
2, 2.5, 3, 6.5\}$ nanoseconds (ns). The downsampling factor is
equal to $N=f_{nyq}/(80\text{MHz})=25$. The spectrum resolution is
set to $62.5$kHz, which corresponds to a power spectrum of length
$32000$. Hence, we have $2NL\ge32000$ and each sampling channel
needs to collect $L\geq 32000/2N=640$ data samples. In our
experiments, we collect data samples of $1$ms' duration for each
channel. We then calculate the autocorrelation sequence and
truncate it to a vector of length $32000$. For our proposed method
and the conventional time-domain algorithm \cite{ArianandaLeus12},
a hamming window is added to the estimated autocorrelation
sequence to enhance the power spectrum estimation accuracy.

Fig. \ref{fig5} plots the power spectrum reconstructed using
noiseless Nyquist data samples collected within $1$ms, using noisy
Nyquist data samples collected within $1$ms, the power spectrum
reconstructed via our proposed method with noisy sub-Nyquist data
samples collected within $1$ms, and noisy sub-Nyquist data samples
collected within $10$ms, where the signal to noise ratio (SNR) is
set to -5dB for the noisy scenario. The noisy signal is generated
by corrupting the original signal $x(t)$ with zero mean Gaussian
noise. The SNR is defined as
\begin{align}
\text{SNR}=10\log_{10}\frac{\sum_{n=1}^{N_t}
|x[n]|^2}{N_t\sigma^2},
\end{align}
where $\{x[n]\}$ denote the Nyquist samples of $x(t)$, $N_t$ is
the number of the Nyquist samples, and $\sigma^2$ denotes the
variance of the Gaussian noise. From Fig. \ref{fig5}, we see that
our proposed method is able to accurately recover the true power
spectrum. Specifically, the normalized mean square errors (NMSE),
$E[\|\boldsymbol{s}-\boldsymbol{\hat{s}}\|_2^2/\|\boldsymbol{s}\|_2^2]$,
for Fig. \ref{fig5}(b)--(d) are respectively given as $0.0057$,
$0.0377$, and $0.0037$, where $\boldsymbol{s}$ and
$\boldsymbol{\hat{s}}$ denotes the groundtruth and estimated power
spectrum, respectively. Moreover, we observe that, with
sub-Nyquist samples collected within $10$ms, our proposed method
provides an accuracy slightly higher than that obtained using
$1$ms Nyquist samples. This result implies that the performance
loss caused by downsampling can be compensated by increasing the
sampling time.

\begin{figure}[t]
    \centering
    \includegraphics [width=200pt,height=180pt]{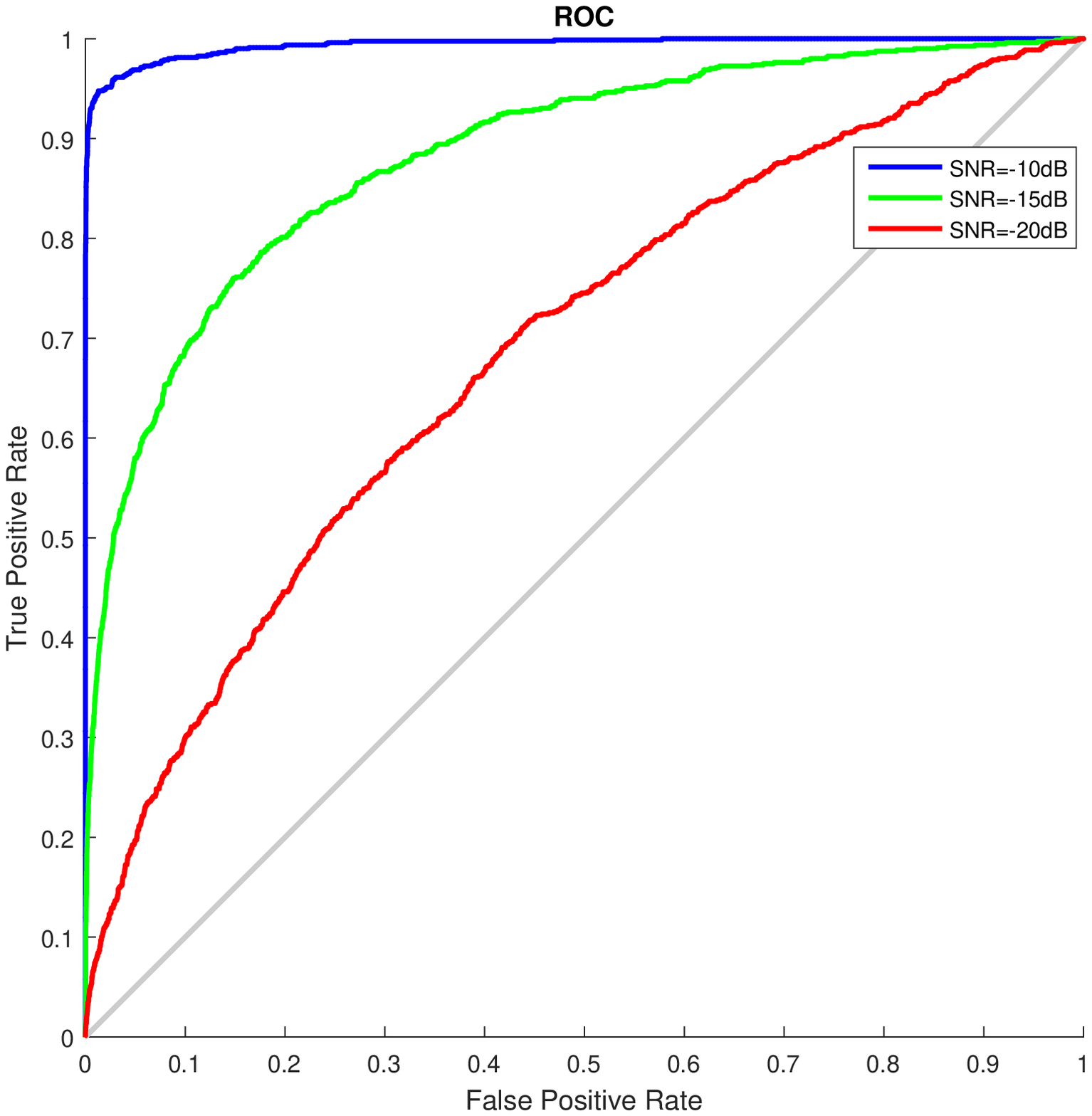}
    \caption{TPR vs. FPR for different SNRs.}
    \label{fig6}
\end{figure}

\begin{figure}[t]
    \centering
    \includegraphics [width=200pt,height=180pt]{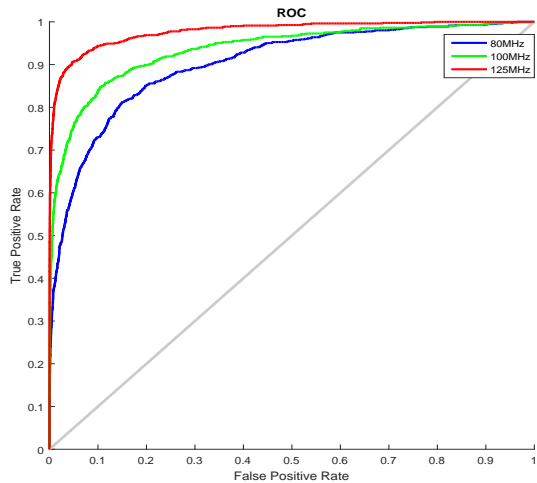}
    \caption{TPR vs. FPR under different compression ratios.}
    \label{fig7}
\end{figure}

\begin{figure}[t]
    \centering
    \includegraphics [width=200pt,height=180pt]{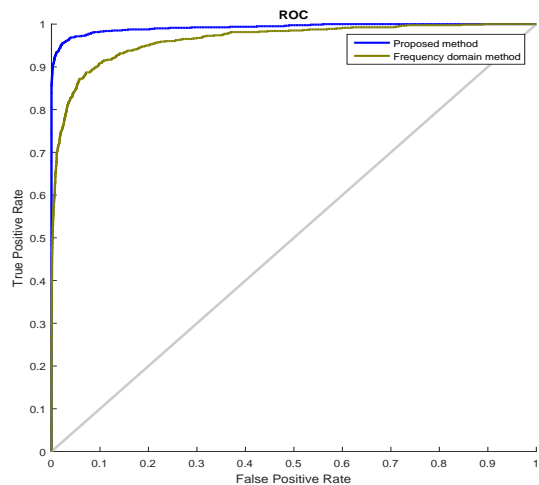}
    \caption{The ROC of our proposed method and the frequency-domain method. The run
        times of our proposed method and the frequency-domain method are $0.17s$ and $0.35s$, respectively.}
    \label{fig8}
\end{figure}

\begin{figure}[t]
    \centering
    \includegraphics [width=200pt,height=180pt]{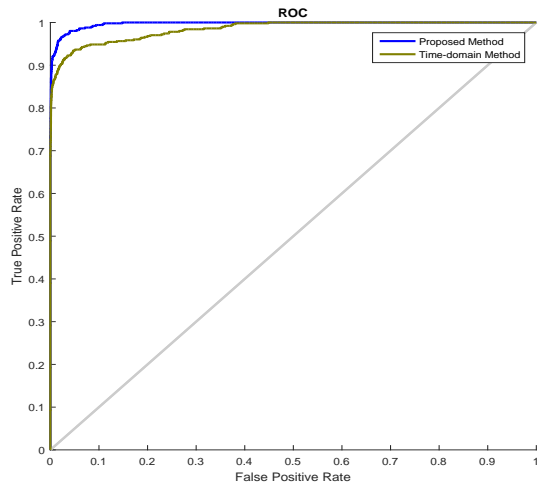}
    \caption{The ROC of our proposed method and the time-domain method. The run
        times of our proposed method and the time-domain method are $0.03s$ and $0.13s$, respectively.}
    \label{fig11}
\end{figure}

To more fully examine the performance, we plot the ROC curve for
our proposed method under different SNRs in Fig. \ref{fig6}. The
ROC curve is created by plotting the true positive rate (TPR)
against the false positive rate (FPR) at various threshold
settings. Specifically, given an estimated power spectrum, one can
use a threshold to identify the frequency locations of the signals
that spread over a wide frequency band, i.e. those grid points
which have a higher energy level than the threshold are considered
positive, otherwise it is considered that no signals reside on
those grid points. The TPR and the FPR can be calculated by
comparing the detection result and the groundtruth. The ROC curve
can then be obtained by trying different thresholds. From Fig.
\ref{fig6}, we see that our proposed method can achieve reliable
detection even in the low SNR regime, say,
$\text{SNR}=-10\text{dB}$. This result demonstrates the
superiority of the use of statistical information for spectrum
sensing. Such a merit is particularly useful for wideband spectrum
sensing because, to obtain a practically meaningful receiver
sensitivity (typical value for the receiver sensitivity is around
$-70\sim -90$dBm), the receiver has to operate in a low SNR
regime. To illustrate this, note that the receiver sensitivity can
be calculated as
\begin{align}
S=10\log(kT_{\text{syst}})+10\log(B)+NF_{RX}+\text{SNR},
\end{align}
where $10\log(kT_{\text{syst}})$ is equal to -174dBm/Hz for a
system temperature of 17 degrees in Celsius, $B$ is the bandwidth
of the signal in hertz, and $NF_{RX}$ is the noise figure of the
receiver in decibels whose typical value is 6dB \cite{NFTV}. Thus
the receiver sensitivity is given by
$S=-174\text{dBm/Hz}+10\log(1\text{GHz})+6\text{dB}+\text{SNR}=-78\text{dBm}+\text{SNR}$.
To achieve a receiver sensitivity of $-80$dBm, the SNR is equal to
$-80-(-78)=-2\text{dB}$ for this typical scenario.

Next, we study the performance of our proposed method under
different compression ratios. We consider cases where the sampling
rate of the ADCs are set to 80MHz, 100MHz, and 125MHz,
respectively. Accordingly, the downsampling factor $N$ is equal to
$25$, $20$, and $16$, respectively, and the compression ratio
equals 0.32, 0.4, and 0.5, respectively. We plot the ROC curve in
Fig. \ref{fig7}, where the SNR is set to -20dB. From Fig.
\ref{fig7}, we see that the performance can be considerably
improved as the sampling rate increases.

To show the computational efficiency, we report the average run
times of the proposed method and the frequency-domain approach.
For the frequency-domain approach, the MWC consists of 8 sampling
channels. In each channel, the signal is first modulated by a
periodic PN sequence with a period of $12.5$ns (i.e. $N=25$), then
filtered using an ideal low-pass filter with a cutoff frequency of
$40$MHz, and finally sampled by an ADC with a sampling rate of
$80$MHz. We note that the MWC and the multicoset sampling scheme
have the same number of channels as well as the same sampling rate
per channel. Thus the comparison is fair. For both methods, we
collect data samples within an interval of $1$ms. We consider two
cases where the frequency resolution is set to $62.5$kHz and
$125$kHz, respectively. The experiments are conducted using MATLAB
R2015b under a laptop with 2.5GHz Intel i7 CPU and 16G RAM. We
report the average run times as well as the ROC curve in Fig.
\ref{fig8}. From Fig. \ref{fig8}, we see that our proposed method
achieves better performance than the frequency-domain approach.
Such a performance improvement is due to the fact that for our
proposed method which explicitly estimates the autocorrelation of
the original signal, a hamming window can be added to the
estimated autocorrelation sequence to enhance the power spectrum
estimation accuracy. Besides, from the reported run times, we see
that the proposed method takes about half the time needed by the
frequency-domain approach for power spectrum reconstruction. Such
an advantage in terms of the computational complexity would be
more significant for practical FPGA-based systems as our proposed
method which involves FFT operations can be more efficiently
implemented.

We also compare our proposed method with the conventional
time-domain method \cite{ArianandaLeus12}. For a fair comparison,
we assume that the multicoset sampling scheme is used by the
time-domain method to collect sub-Nyquist samples. The number of
sampling channels and the time delay parameters are the same as
described earlier. To relieve the large amount of memory required
by the conventional time-domain method, the frequency resolution
is set to 1MHz. The SNR is set to -12dB in our experiments. We
collect data samples within 0.1ms. Average run times and ROC
curves are reported in Fig. \ref{fig11}. From Fig. \ref{fig11}, we
see that our proposed method runs faster than the conventional
time-domain method. Besides, we observe that our proposed method
slightly outperforms the time-domain method, which is possibly due
to the fact that to calculate the correlation matrix, the
conventional time-domain method needs to divide the collected data
samples into a number of segments and neglects the correlation
between different segments, while our proposed method deals with
all the data samples in a batch and thus is able to obtain a
better autocorrelation estimation.

Note that our proposed method is able to reconstruct the power
spectrum without placing any sparsity constraint on the spectrum
under monitoring. To show this, we consider a multi-band signal
which consists of 32 narrowband components within the frequency
range $[0,1]$GHz. The narrowband components are generated by
passing the white Gaussian noise through band-pass FIR filters.
The bandwidth of each narrowband signal is set to 20MHz. The
center frequencies of these narrowband signals are uniformly
distributed so that no narrowband signals overlap each other. The
spectral occupancy ratio can be easily calculated as $(32\times
20\text{MHz})/(1\text{GHz})=64\%$, which indicates that the power
spectrum under monitoring is non-sparse. The sampling setup is the
same as that used in our previous examples, in which we use $M=8$
branches and the sampling rate per channel is set to $80$MHz. Fig.
\ref{fig10} shows the estimated power spectrum using noiseless
sub-Nyquist samples collected within 10ms. For a comparison, the
power spectrum estimated using noiseless Nyquist samples collected
within 1ms is also included. The NMSEs of the reconstructed power
spectrum for Fig. \ref{fig10}(a)--(b) are respectively given as
$0.0055$ and $0.0014$. From Fig. \ref{fig10}, we see that although
the spectrum under monitoring is non-sparse, our proposed method
still provides a reliable estimate of the original power spectrum.

\begin{figure*}[t]
    \centering
    \includegraphics [width=200pt]{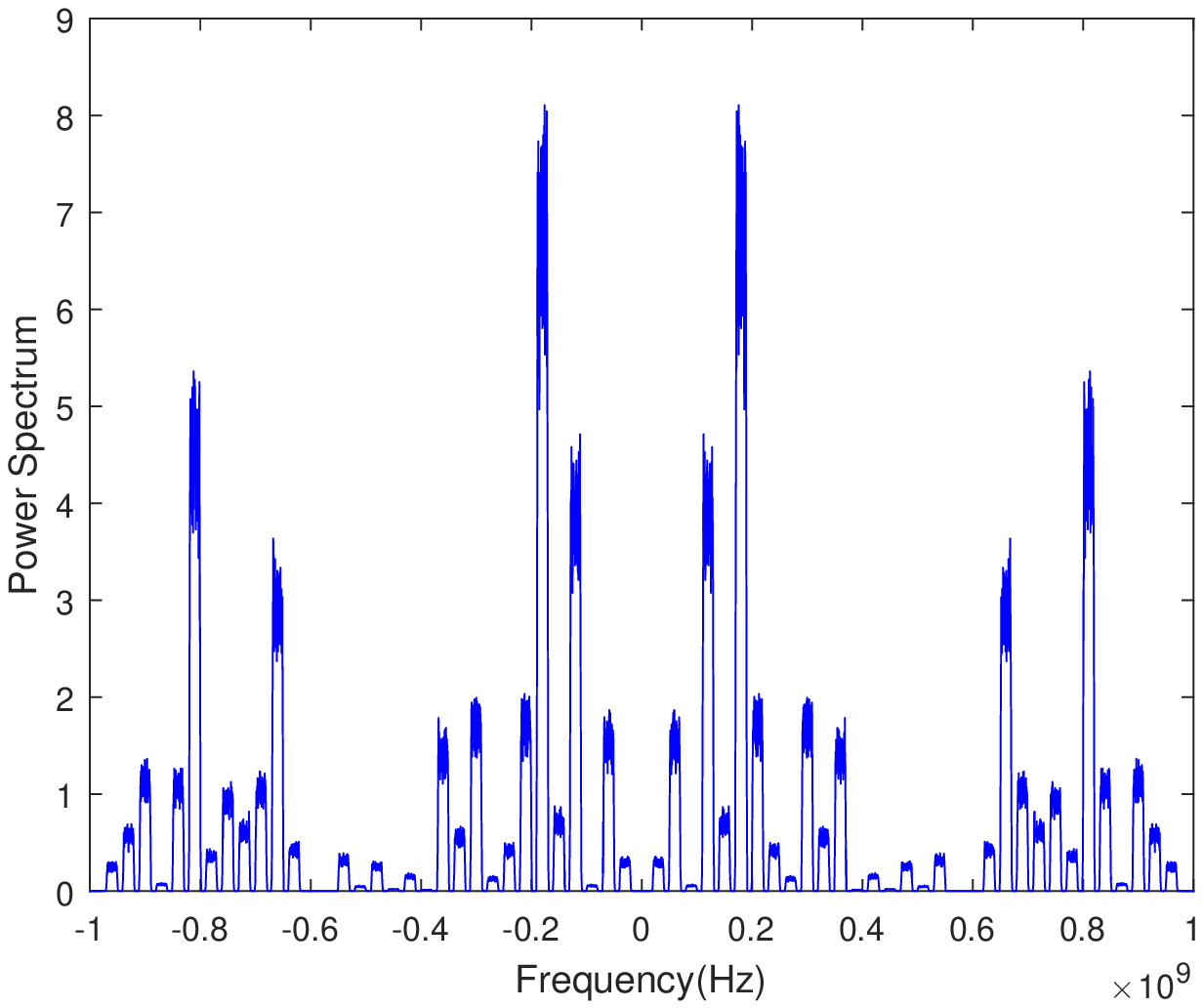}
    \hspace{15pt}
    \includegraphics [width=200pt]{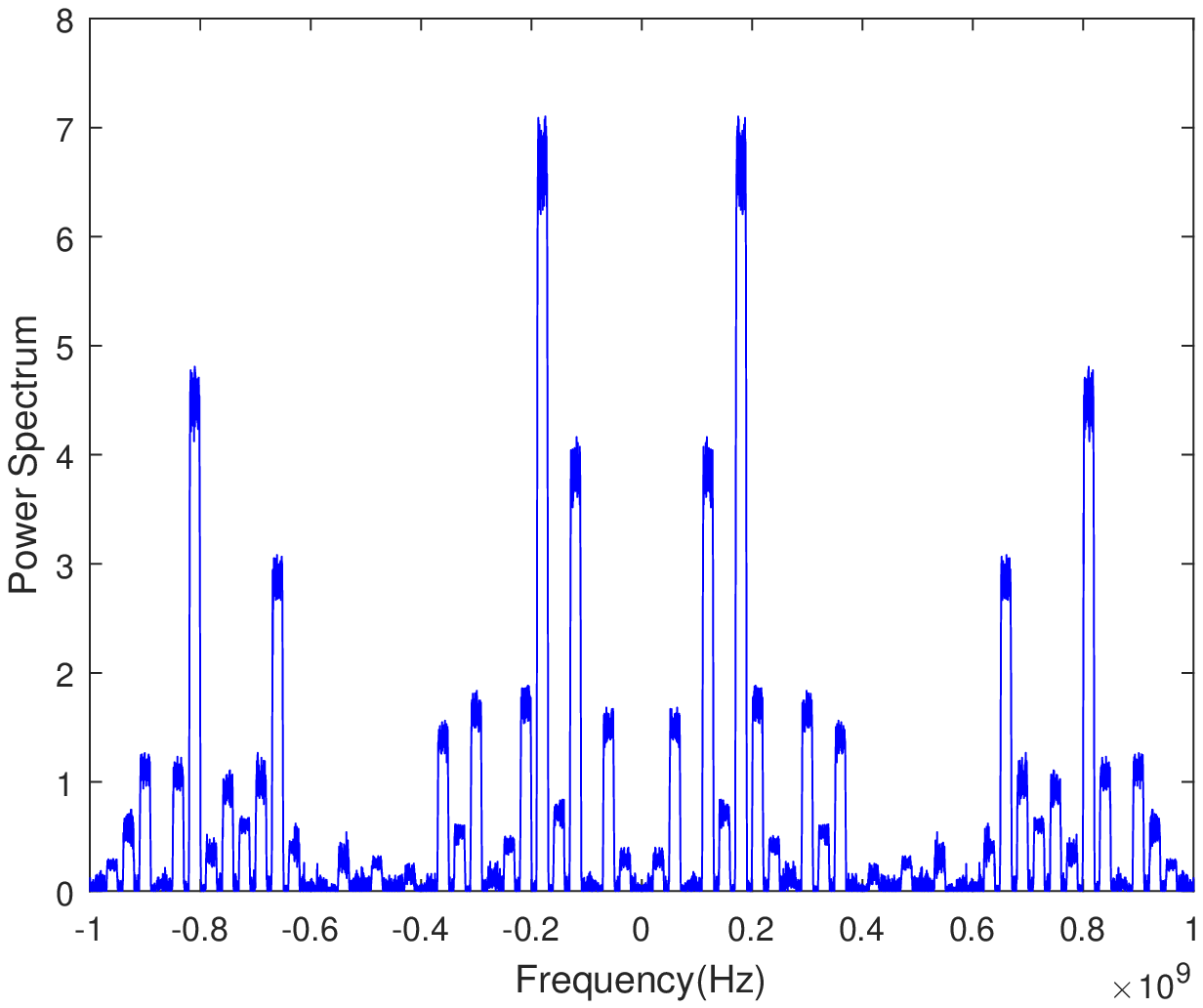}
    \caption{From left to right: Power spectrum reconstructed using noiseless Nyquist
    samples collected within 1ms; Power spectrum
        reconstructed via our proposed method with 10ms noiseless sub-Nyquist samples.}
    \label{fig10}
\end{figure*}

\subsection{Results on Cyclostationary Signals}
To show that the proposed method is applicable to cyclostationary
signals, we generate two communication signals, namely, a BPSK
signal and a QAM16 signal, which are known to be cyclostationary,
within the frequency range $[0,1]$GHz. The carrier frequencies of
these two signals are set to 130MHz and 380MHz, respectively, and
their symbol rates are set to 10M symbols per second. The SNR is
set to -5dB and the multicoset sampling architecture has the same
setup as that mentioned earlier in this section. Fig. \ref{fig9}
shows the estimated power spectra using noisy data samples
collected within 1ms and 10ms, from which we see that our proposed
method can yield an accurate power spectrum estimate of
cyclostationary signals.

\begin{figure*}[t]
    \centering
    \subfloat[]{
        \includegraphics [width=200pt]{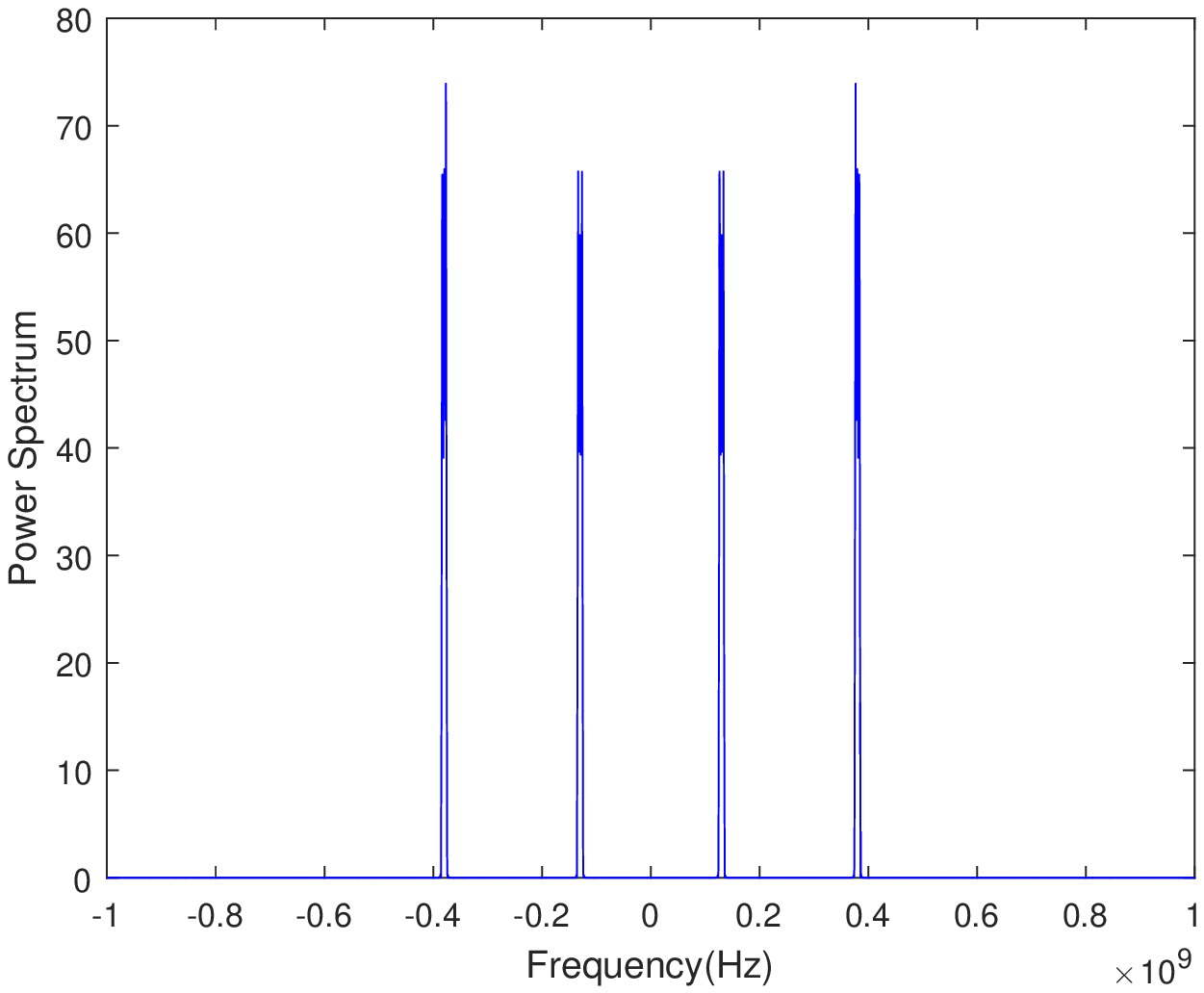}
    }
    \hspace{20pt}
    \subfloat[]{
        \includegraphics [width=200pt]{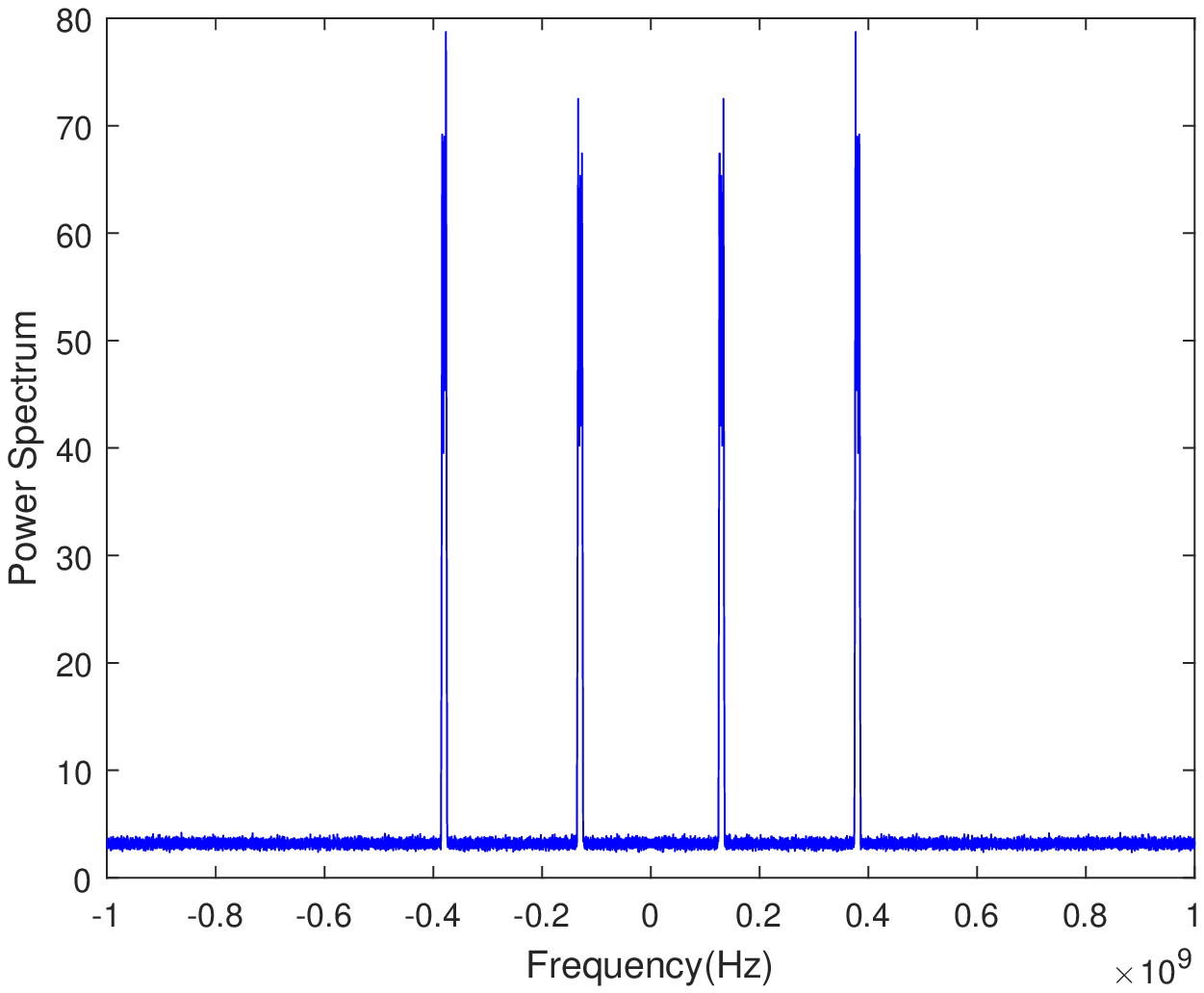}
    }
    \\
    \subfloat[]{
        \includegraphics [width=200pt]{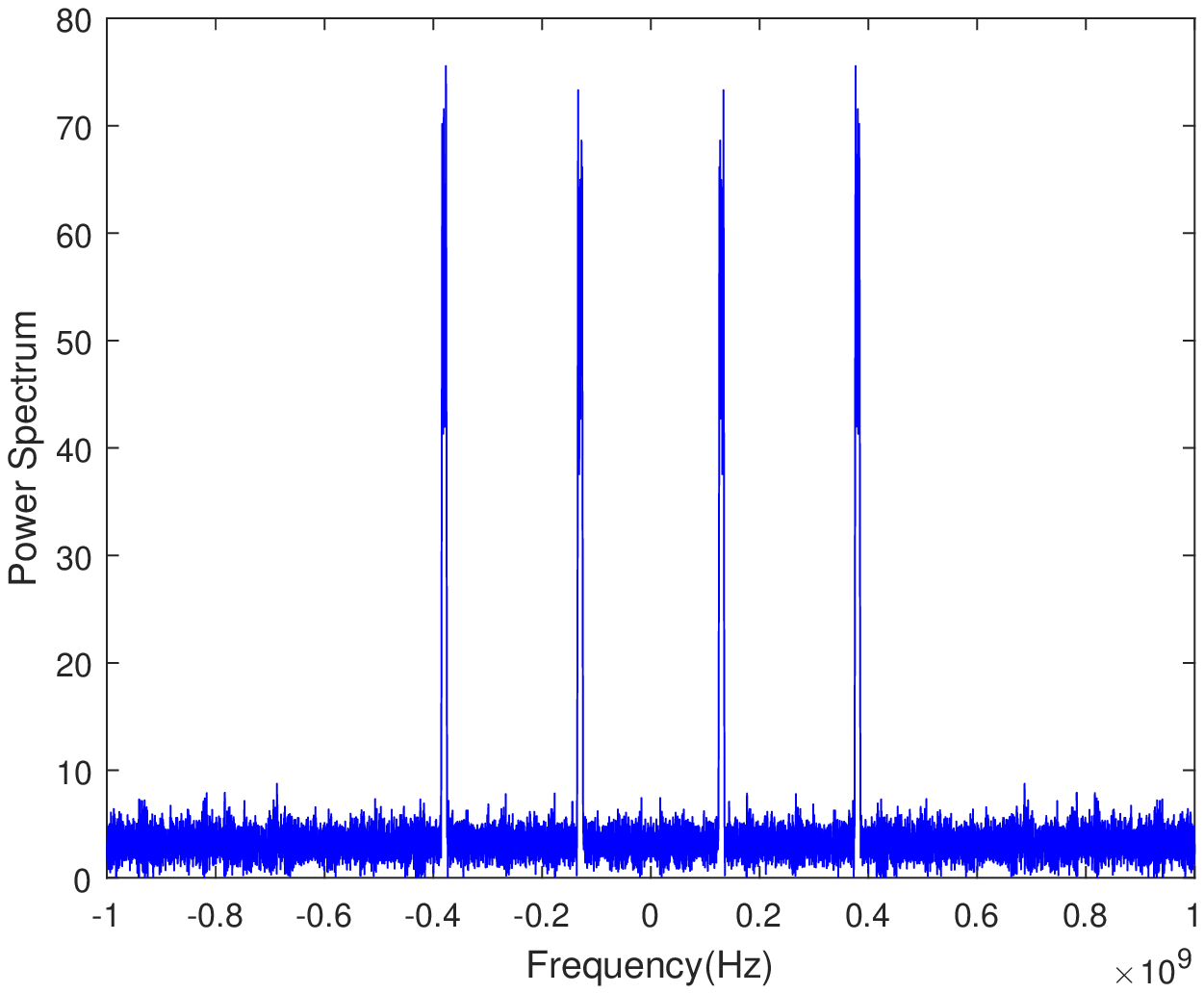}
    }
    \hspace{20pt}
    \subfloat[]{
        \includegraphics [width=200pt]{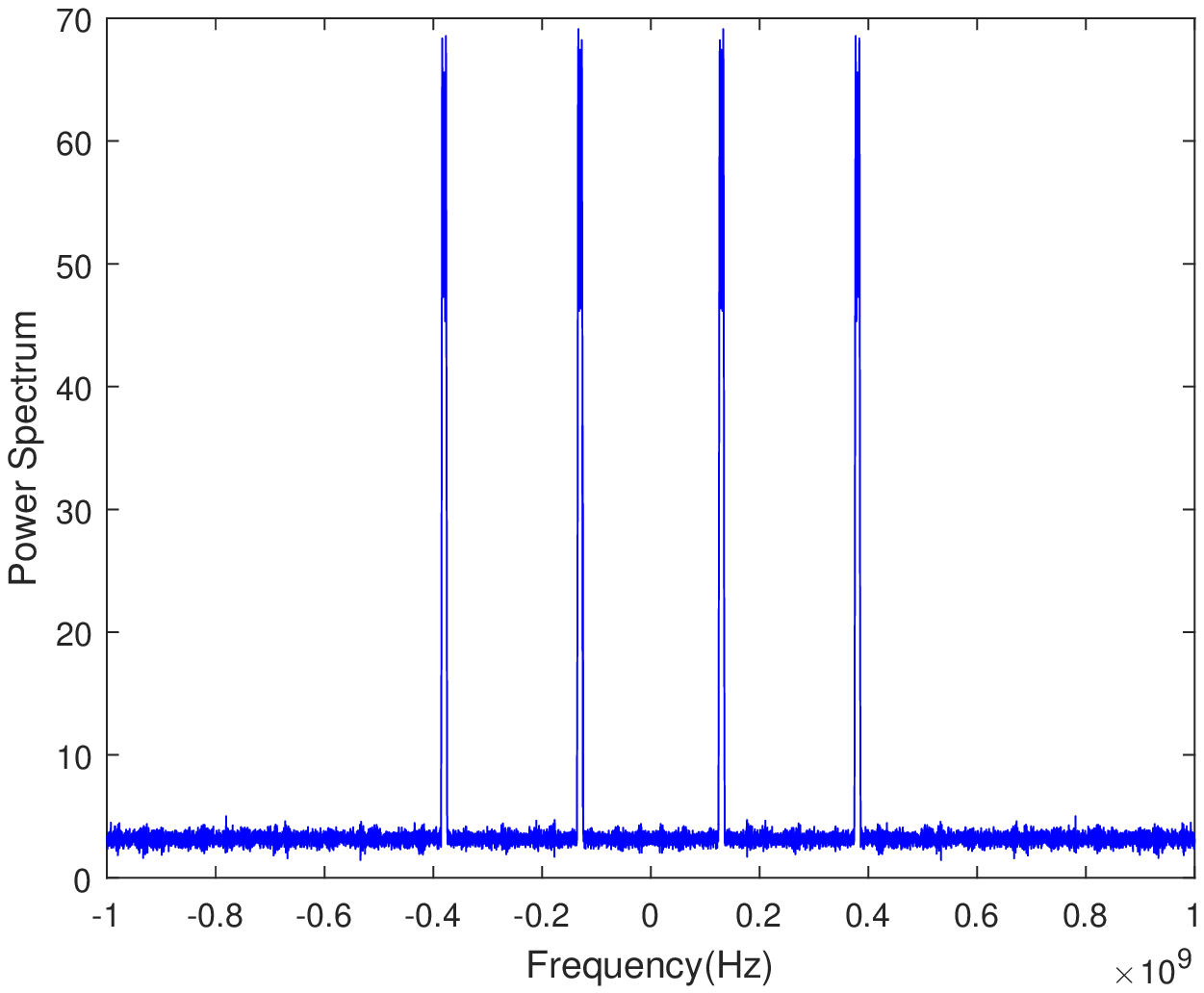}
    }
    \caption{(a) Power spectrum reconstructed using noiseless Nyquist samples collected within 1ms;
        (b) Power spectrum reconstructed using noisy Nyquist samples collected within 1ms;
        (c) Power spectrum reconstructed via our proposed method
        with 1ms noisy sub-Nyquist samples; (d) Power spectrum
        reconstructed via our proposed method with 10ms noisy sub-Nyquist samples.}
    \label{fig9}
\end{figure*}

\section{Conclusions} \label{sec:conclusions}
In this paper, we studied the problem of compressed power spectrum
estimation which aims to reconstruct the power spectrum of a
wide-sense stationary multi-band signal based on sub-Nyquist
samples. By exploiting the sampling structure of the multicoset
sampling scheme, we developed a fast compressed power spectrum
estimation method. The proposed method has a lower computational
complexity than existing methods, and can be efficiently
implemented in practical systems as its primary computational task
consists of FFT operations. Simulation results were provided to
illustrate the computational efficiency and effectiveness of our
proposed method.

\useRomanappendicesfalse
\appendices

\section{Proof of Lemma \ref{lemma1}} \label{appA}
We only need to show that for any $k\in [-LN+1,LN-1]$, there
exists at least one $n\in [0,LN-1]$ such that $I[n]I[n-k]=1$. Note
that $\{I[n]\}$ is a periodic sequence with period $N$. Hence if
$I[n]I[n-k]=1$, then for any integer $a$ such that
$n-k+aN\in[0,LN-1]$, we also have $I[n]I[n-k+aN]=1$. In other
words, if $Q_k>0$, then $Q_{k+aN}>0$. Therefore we only need to
examine $k$ in the range of $|k|\le\lfloor N/2\rfloor$. Recall
that for any $|k|\le\lfloor N/2\rfloor$, there exists $m_1,m_2$
and $c\in\{-1,0,1\}$ such that
\begin{align}
k=\Delta_{m_1}-\Delta_{m_2}+cN.
\end{align}
Set $n=\Delta_{m_1}+c'N$ and $n-k=\Delta_{m_2}+(c+c')N$, where
$c'$ is an integer such that $n\in [0,LN-1]$ and $n-k\in
[0,LN-1]$. According to the definition of $I[n]$, it is easy to
know that $I[\Delta_{m_1}+c'N]=I[\Delta_{m_2}+(c+c')N]=1$.
Consequently, we have $I[n]I[n-k]=1$ and thus $Q_k>0$.

\end{document}